%% file: CM_Security_Allocation.tex
\documentclass[runningheads]{llncs}
\usepackage[T1]{fontenc}
\usepackage{cite}
\usepackage[hidelinks]{hyperref}
\usepackage{amsmath,amssymb,amsfonts}
\usepackage{algpseudocode,algorithm}
\usepackage{graphicx}
\usepackage{dsfont}
\usepackage{textcomp}
\usepackage{subcaption}
\usepackage{mathtools}
\usepackage{xcolor}
\usepackage{tabularx}
\usepackage{academicons}
\usepackage{multirow}
\usepackage{diagbox}
\usepackage[colorinlistoftodos]{todonotes}

\input{deflatex3.tex}

\newtheorem{Assumption}{Assumption}
\newtheorem{Theorem}{Theorem}
\newtheorem{Definition}{Definition}
\newtheorem{Remark}{Remark}
\usepackage{graphicx}
\begin{document}
\title{Centrality-based
Security Allocation in
Networked Control Systems}
\titlerunning{Centrality-based
Security Allocation}
%
\author{Anh Tung Nguyen\orcidID{0000-0001-9316-233X} \and Andreas Hertzberg  \and \\ André M. H. Teixeira\orcidID{0000-0001-5491-4068}}
\authorrunning{Anh Tung Nguyen et al.}
%
\institute{Department of Information Technology, Uppsala University, \\ PO Box 337, SE-75105, Uppsala, Sweden \\
\email{anh.tung.nguyen@it.uu.se} \\
\email{andreas.hertzberg.0811@student.uu.se} \\
\email{andre.teixeira@it.uu.se}
}
\maketitle              
\begin{abstract}
This paper addresses the security allocation problem within networked control systems, which consist of multiple interconnected control systems under the influence of two opposing agents: a defender and a malicious adversary. The adversary aims to maximize the worst-case attack impact on system performance while remaining undetected by launching stealthy data injection attacks on one or several interconnected control systems. Conversely, the defender's objective is to allocate security resources to detect and mitigate these worst-case attacks. A novel centrality-based approach is proposed to guide the allocation of security resources to the most connected or influential subsystems within the network. The methodology involves comparing the worst-case attack impact for both the optimal and centrality-based security allocation solutions. The results demonstrate that the centrality measure approach enables significantly faster allocation of security resources with acceptable levels of performance loss compared to the optimal solution, making it suitable for large-scale networks. The proposed method is validated through numerical examples using Erdős–Rényi graphs.

\keywords{Security metric  \and networked control systems \and centrality measures \and stealthy attacks \and cyber-physical security.}
\end{abstract}
\section{Introduction}
\label{sec:intro}
Networked control systems are integral to modern infrastructure, which are prominently featured in power grids, transportation networks, and water distribution systems. These systems leverage open and widely-used information and communication technologies, including the public Internet and wireless communication \cite{teixeira2015secure,falliere2011w32,kshetri2017hacking}. However, using unprotected wireless communication channels may expose them to cyber threats, which can lead to severe financial losses and societal disruptions. Notable examples include the Iranian industrial control system, which suffered from the Stuxnet malware in 2010 \cite{falliere2011w32}, and the Ukrainian power grid, which was compromised by the Industroyer malware in 2016 \cite{kshetri2017hacking}. These incidents underscore the critical importance of bolstering security measures within control systems to protect against such potentially devastating cyber attacks.

The issue of cybersecurity has garnered significant attention from researchers within the Information Technology community. Solutions in this area are critical for maintaining the core principles of the Confidentiality, Integrity, and Availability (CIA) triad for data and IT services. These principles are defined as follows: 1) Confidentiality: ensures that data stored, transmitted, and processed remains private and accessible only to authorized individuals; 2) Integrity: ensures that data is reliable and cannot be modified by unauthorized means; and 3) Availability: ensures that the system provides timely access to data when requested. Among these, integrity is particularly emphasized, as any breach directly impacts system performance \cite{teixeira2015secure}. In exploring potential threats to control systems, Zhang et al. \cite{zhang2021stealthy} introduced the concept of controlled invariant subspace, derived from geometric control theory, which remains independent of system outputs and nonlinear functions, to propose a covert attack on certain classes of nonlinear systems. Addressing nonlinear systems presents considerable challenges, requiring attackers to gain deep system knowledge and intercept real-time transmitted data. Rather than focusing on nonlinear system models, Park et al. \cite{park2019stealthy} examined potential threats in linear systems with uncertainties. By thoroughly analyzing system dynamics and conventional anomaly detection mechanisms, these sophisticated stealthy data injection attacks significantly disrupt control systems without being detected. This highlights the need for more advanced detection methods. Least squares-based fusion techniques are widely employed to identify and mitigate the harmful effects of false data injection attacks on state estimation \cite{li2023secure,li2024secure}. Physical watermarking, initially designed to detect replay attacks, has been adapted to signal the presence of false data injection attacks \cite{mo2013detecting}. Recent advancements have further reduced the average detection delay time \cite{naha2023quickest}. However, physical watermarking can increase control costs and reduce actuator lifespan in control systems. Multiplicative watermarking \cite{teixeira2019optimal,ferrari2020switching,gallo2021design} offers an effective solution to these challenges. Additionally, reallocating defensive resources can help mitigate the adverse effects of attackers \cite{anand2022risk}, enhancing the system's resilience against stealthy false data injection attacks. Nevertheless, it is important to acknowledge that contemporary intelligent adversaries can learn and adapt to these mitigation strategies, refining their attack methods to maintain stealth. Therefore, this paper emphasizes the need for a comprehensive investigation of networked control systems under the influence of stealthy false data injection attacks.

Upon review of the above existing studies \cite{li2023secure,li2024secure,zhang2021stealthy,teixeira2019optimal,ferrari2020switching,gallo2021design,naha2023quickest,mo2013detecting,park2019stealthy,anand2022risk}, it becomes evident that the literature has primarily focused on secure estimation and control from the viewpoint of either the defender or the adversary.
However, it is important to acknowledge that both parties encounter similar challenges. While the defender confronts resource limitations in countering malicious activities, the adversary also faces energy constraints during attack execution. Therefore, it is highly pertinent to address this matter within a comprehensive framework that encompasses both the defender and the adversary. Fortunately, the security allocation problem between the defender and the malicious adversary fits well within the framework of game theory, which is
one of the most efficient frameworks addressing the challenge of optimal decision-making among non-cooperative parties \cite{bacsar1998dynamic}.
In an attempt to apply game theory to deal with the security allocation problem, the authors in \cite{li2018false} seek the Stackelberg equilibrium by formulating the optimization problems in \cite[Theorem 3.3]{li2018false} where all the attack scenarios are considered in their constraints. To solve those optimization problems in \cite[Theorem 3.3]{li2018false}, the authors use a linear mapping from action space to the game payoff (see more in \cite[Section III.B]{li2018false}), removing the complexity of computing game payoff.
In the same line with \cite{li2018false}, the authors in \cite{yuan2019stackelberg} also seek the Stackelberg equilibrium by considering all the attack scenarios in \cite[Theorem 1]{yuan2019stackelberg}. The study in \cite{shukla2022robust} has a similar game setup to ours where the defender and the adversary have discrete action spaces. They also conduct the traditional backward induction to find the Stackelberg equilibrium \cite[Algorithm 1]{shukla2022robust}. To be able to proceed with the algorithm, the authors in \cite{shukla2022robust} need to investigate all the possible action scenarios shown in \cite{shukla2022robust}, resulting in a very high computational cost \cite[Section V]{shukla2022robust}. Although not considering a Stackelberg game setting, the authors in \cite{purposeadver,milovsevic2023strategic} tackle an 
analogous problem. The authors in \cite{purposeadver} find the Bayesian game equilibrium through solving an optimization problem \cite[Section VI.A]{purposeadver} where the game payoff matrix is employed in its constraint. Clearly, building up the game payoff matrix enumerates all the action scenarios \cite[Tables 1 and 2]{purposeadver}. Although the problem in \cite{milovsevic2023strategic} is not clearly formulated as a game between a defender and an adversary, the authors discuss the mutual formulation between theirs and a zero-sum game setting in \cite[Section IV.B]{milovsevic2023strategic}. The authors in \cite{milovsevic2023strategic} find the optimal action for the operator, who has the same role as the defender in our work, through solving an optimization problem in \cite[Section III.C]{milovsevic2023strategic} where they consider all the possible actions in discrete action spaces. Further, many other concepts of games
describing networked systems subjected to cyber attacks such
as matrix games \cite{nguyen2022single, purposeadver, shukla2022robust, nguyen2022zero, nguyen2023optimal}, dynamic games \cite{gupta2016dynamic}, and stochastic games \cite{miao2018hybrid} have been studied.
In summary, the consideration of all the action scenarios remains in recent studies  \cite{shukla2022robust,purposeadver,li2018false,yuan2019stackelberg,milovsevic2023strategic}. Particularly, the authors in \cite{li2018false,yuan2019stackelberg,milovsevic2023strategic} deal with security problems by solving a single large optimization problem that addresses all the action scenarios. As a consequence, the computation in those existing studies is considerably heavy. We aim to address such an issue in this study.

In this paper, we consider a continuous-time networked control system, associated with an undirected connected graph, involving two strategic agents: a defender and a malicious adversary. The system consists of multiple interconnected subsystems, referred as to vertices. The aim of the adversary is to maximally disrupt the global performance of the network without being detected by the defender. 
Meanwhile, the defender selects several monitor vertices to measure their outputs with the purpose of alleviating the attack impact. To assist the defender in allocating the defense resources, we adopt centrality measures including degree, betweenness, and closeness centrality measures to seek the most potential monitor vertices. This approach enables us to solve less complicated optimization problems, massively alleviating the computational cost compared to finding the optimal solution.

\textbf{Notation:} the set of real positive numbers is denoted as $\Rbb_+$ ; $\Rbb^n$ and $\Rbb^{n \times m}$ stand for sets of real $n$-dimensional vectors and $n$-row $m$-column matrices, respectively.
A vector with the $i$-th element set to one and the other elements set to zero is denoted
$e_i \in \Rbb^n$.
A positive definite matrix $A$ and a positive semi-definite matrix $B$ are denoted as $A \succ 0$ and $B \succeq 0$, respectively. The notations $A \prec 0$ and $B \preceq 0$ stand for $-A \succ 0$ and $-B \succeq 0$, respectively.
The space of square-integrable  functions is defined as $\Lc_{2} \triangleq \bigl\{f: \Rbb_{+} \rightarrow \Rbb ~|~ \norm{f}^2_{\Lc_2 [0,\infty]} < \infty \bigr\}$ and the extended space be defined as $\Lc_{2e} \triangleq \bigl\{ f: \Rbb_{+} \rightarrow \Rbb ~|~ \norm{f}^2_{\Lc_2 [0,T]} < \infty,~ \forall~ 0 < T < \infty \bigr\} $.
The notation $\norm{x}^2_{\Lc_2}$  is used  as shorthand for the norm $\norm{x}_{\Lc_2 [0,T]}^2 \triangleq \frac{1}{T}\int_{0}^{T} \norm{x(t)}_2^2~\text{d}t$ if the time horizon $[0,T]$ is clear from the context.
Let $\Gc \triangleq (\Vc, \Ec, A)$ be {an undirected} graph with the set of $N$ vertices $\Vc = \{1, 2,...,N\}$, the set of edges $\Ec \subseteq \Vc \times \Vc $, and the  adjacency matrix $A = [a_{ij}]$.
For any $(i,j) \in \Ec, ~i\neq j$, the element of the adjacency matrix $a_{ij}$ is positive, and with $(i,j) \notin \Ec$ or $i = j$, $a_{ij} = 0$. 
The degree of vertex $i$ is denoted as 
$\Delta_i =  \sum_{j=1}^{n} a_{ij}$ and the degree matrix of graph $\Gc$ is defined as 
$\Delta = {\bf diag}\big(\Delta_1, \Delta_2,\dots, \Delta_N\big)$, where ${\bf diag}$ stands for a diagonal matrix. 
The Laplacian matrix is defined as $L = [\ell_{ij}] = \Delta - A$.
Further, $\Gc$ is called an undirected connected graph if and only if matrix $A$ is symmetric and the algebraic multiplicity of zero as an eigenvalue of $L$ is one.
The set of all neighbours of vertex $i$ is denoted as $\Nc_i = \{j \in \Vc~|~ (i,j) \in \Ec \}$.

\section{Problem Description}
\label{sec:prob}
In this section, we present the mathematical description of a networked control system under cyber attacks. Then, the purposes and resources of the adversary and the defender are introduced.
\subsection{Networked control systems under false data injection attacks}
Consider a networked control system associated with an undirected connected graph $\Gc \triangleq (\Vc, \, \Ec, \, A)$ with $N$ vertices
where every vertex $i$ is described by the one-dimensional state-space model:
\begin{equation}
\dot{x}_{i}(t) = u_{i}(t), \, i \in \{1, 2, \ldots, N\},
\label{eq:statei}
\end{equation}
where ${x}_{i}(t) \in \mathbb{R}$ represents the state variable of vertex $i$, and $u_{i}(t) \in \mathbb{R}$ denotes the control input for the same vertex. The state of the entire network will be denoted as $x(t) \triangleq [x_{1}(t), x_{2}(t), \ldots, x_{N}(t)]^\top \in \mathbb{R}^N$. The control law for each vertex is:
\begin{equation}
u_{i}(t) = \sum_{j \, \in \, \mathcal{N}_{i}} a_{ji} \, (x_{j}(t)-x_{i}(t)),~ \forall \, i \in \Vc.
\label{eq:kontrolllaw}
\end{equation}
By applying the control law \eqref{eq:kontrolllaw} to the state-space model \eqref{eq:statei} for the entire network, one obtains the following closed-loop model \eqref{eq:close} without attacks:

\begin{equation}
\dot{x}(t) = -Lx(t),
\label{eq:close}
\end{equation} 
where $L$ is a Laplacian matrix representing the graph $\Gc$.

When the system is in the presence of a malicious adversary, 
we assume that the adversary selects a set of vertices to launch false data injection attacks (says $\zeta(t)$). Let the set of attack vertices be denoted as $\mathcal{A} \triangleq \{a_1, a_2, \dots, a_{n_{a}}\}$ for a given attack budget $n_{a} = |\mathcal{A}|$. It is worth noting that there is no benefit of choosing fewer attack vertices than the budget and therefore we ignore the case of $|\mathcal{A}| <  n_{a}$. 
The $N$-dimensional binary vector $B(\Ac) \in \{0,\,1\}^N$ stands for the chosen attack vertices where its $i$-th element $B(\Ac)_i = 1$ if $i$-th vertex is attacked and $B(\Ac)_i = 0$ otherwise. 
Then, the networked control system \eqref{eq:close} under false data injection attacks can be rewritten as follows:
\begin{equation}
    \dot{x}(t) = -Lx(t) + B(\Ac)\zeta(t).
\label{eq:closeatt}
\end{equation}
Let us make use of the following assumption.
\begin{Assumption}
    The networked control system \eqref{eq:closeatt} is at its equilibrium $x_e = 0$ before being affected by attack signals.
\end{Assumption}
In the following parts, the network performance and the purposes of the adversary and the defender will be described in more detail.

\subsection{Network performance and monitoring}
In the networked control system, the overall performance of the system is the collective performance of all vertices in the system. Consequently, the cost function for the system performance can be denoted as:
\begin{align}
    J &= \sum_{i \, \in \, \mathcal{V}} \| y_{i} \|_{\mathcal{L}_2}^2, \label{eq:netper}\\ 
    y_{i}(t) &= e_{i}^\top x(t)  , \ \forall \, i \in \mathcal{V}.
    \label{eq:outputrho}
\end{align}
The security challenge against the malicious adversary stems from the assumed budget limitation within the system, making it necessary to select a subset of vertices to monitor for the residual signal since it is not feasible to monitor all vertices. This security resource constraint enforces the defender chooses a subset of the vertex set $\mathcal{V}$ to serve as the monitor set $\mathcal{M} = \{m_{1}, m_{2}, \ldots, m_{n_s}\}$ for a given sensor budget $n_{s} = |\mathcal{M}|$. The outputs of the monitor vertices can be written as follows:
\begin{equation}
y_{m_k}(t) = e_{m_k}^\top x(t), ~\forall \, m_k \in \Mc.
\label{eq:monitoreq}
\end{equation}
By monitoring outputs \eqref{eq:monitoreq}, the defender can notify the presence of the adversary if the energy of the monitor outputs crosses a given alarm threshold $\delta$, i.e.,
\begin{align}
    \norm{y_{m_k}}_{\Lc_2}^2 > \delta.
\end{align}
\begin{Remark}
    In this paper, we assume that the alarm threshold $\delta$ is given by the operator, which cannot be altered by the defender. Therefore, the problem of choosing the alarm threshold is not considered in this study. Instead, the defender is allowed to choose vertices, which have their given corresponding alarm thresholds, to monitor their outputs with the purpose of detecting malicious activities.
\end{Remark}

\subsection{The purposes of adversary and defender}
In the following, we first present the purpose of the adversary. Then, the purpose of the defender defender is introduced.

\textbf{The purpose of the malicious adversary} is to maximize the negative impact on the system performance, i.e., the adversary seeks to maximize the cost function $J$ in \eqref{eq:netper}. Furthermore, there are strong arguments as to why the malicious adversary simultaneously would seek to remain stealthy to the defender, (see the discussion in \cite[Section II-E]{purposeadver}). One can assume that the defender would quickly mitigate the attack when detected by shutting down the maliciously affected communication channels. This means there will be no payoff for the adversary, additionally one can assume that the malicious adversary has invested resources to obtain system knowledge and to infiltrate the system, which will most likely be lost when detected. Therefore, to avoid losing the investment the adversary has made, this paper considers that the malicious adversary conducts stealthy data injection attacks, which will be defined in the following. 

Let us consider the networked control system under cyber attacks \eqref{eq:closeatt} and the monitor outputs \eqref{eq:monitoreq}.
The false data injection attack $\zeta(t)$ in \eqref{eq:closeatt} is called a stealthy false data injection attack if, and only if, the monitor outputs $\|y_{m_{k}}\|_{\mathcal{L}_2}^2 < \delta$ for all $m_k \in \mathcal{M}$. If at least one monitor vertex  $m_k \in \mathcal{M}$ does not satisfy this, the attack from the malicious adversary is detected.

Additionally, a realistic assumption is that the adversary has finite resources not only in the number of attack vertices but also in the energy of the attack signals. In more detail, 
let us denote a positive number $A_{e}$ as the maximum attack energy of the attack signal $\zeta(t)$ 
for a given time horizon $[0, T]$, i.e. 
\begin{align}
    \| \zeta \|_{\mathcal{L}_2[0, T]}^2 \leq A_{e}.
\end{align}

\textbf{The purpose of the defender} is to allocate security resources given the potential existence of a malicious adversary that seeks to negatively impact the system's performance. However, a realistic assumption is that the defender is unable to foresee the action of the adversary. Therefore, the defender needs to formulate a defense strategy without knowing the attack strategy. An approach that has not been considered in existing literature, is the utilization of centrality measures to 
select monitor vertices in the networked control system. This approach will be utilized and presented more thoroughly in the following sections.

\section{Risk analysis and evaluation}
\label{sec:risk}
Given the networked control system under cyber attacks outlined in \eqref{eq:closeatt}, the network performance \eqref{eq:netper}, and the monitor outputs \eqref{eq:monitoreq}, the attack set $\Ac$, and the monitor set $\Mc$, 
the worst-case attack impact (WCAI) on the network performance is formulated as follows: 
\begin{equation}
\begin{aligned}
J(\mathcal{A},\mathcal{M}) = \sup_{\zeta \, \in \, \mathcal{L}_{2e}}&~~ \sum_{i \, \in \, \mathcal{V}} \, \| y_{i} \|_{\mathcal{L}_2}^2  \\
\text{s.t.}~~&~~  \|y_{m_{k}}\|_{\mathcal{L}_2}^2 < \delta, \, \forall \,  m_k \in \mathcal{M}, 
\\
&~~  \| \zeta \|_{\mathcal{L}_2}^2 \leq A_{e},  
\\
&~~ \eqref{eq:closeatt},\eqref{eq:outputrho},\eqref{eq:monitoreq},~ x(0) = 0.
\label{eq:Jwsup}
\end{aligned}
\end{equation}
The following theorem presents the boundedness of the non-convex optimization problem \eqref{eq:Jwsup} and its computation.
\begin{Theorem}
The worst-case impact of stealthy FDI attacks is always bounded and computed by the following semi-definite programming (SDP) problem.
\begin{align}
J(\Ac,\,\Mc) = &\min_{P = P^\top 
 \succeq 0, \, \beta \,>\, 0 ,\, \gamma_{m_{k}} \,>\, 0}\ ~A_{e}\, \beta + \delta\sum_{m_{k} \in \, \mathcal{M}} \gamma_{m_{k}} \label{eq:JwSDP} \\
\text{s.t.}~&~ \ba{cc}
-L P - PL ~~~& PB(\Ac) \\
B(\Ac)^\top P & -\beta
\ea + \sum_{i \, \in \, \mathcal{V}} 
\ba{c}
e_{i}^\top \\
0
\ea
\ba{cc}
e_{i} & 0
\ea \non \\
&~
- \sum_{m_{k} \, \in \, \mathcal{M}}\gamma_{m_{k}}
\ba{c}
e_{m_{k}}^\top \\
0
\ea
\ba{cc}
e_{m_{k}} & 0
\ea \non \preceq 0.
    \end{align} \QET
\end{Theorem}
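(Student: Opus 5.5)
The plan is to treat \eqref{eq:Jwsup} as a non-convex quadratic program over signals and to show that the SDP \eqref{eq:JwSDP} is its Lagrangian (S-procedure) dual. Strong duality would then follow from a lossless S-procedure in the dissipativity / KYP setting. I write $J^\star$ for the value of \eqref{eq:Jwsup} and $\hat J$ for the value of \eqref{eq:JwSDP}.

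\textbf{Step 1: weak duality, $J^\star \le \hat J$.} Take any feasible $(P,\beta,\gamma_{m_k})$ and any admissible $\zeta$ with state $x$ from \eqref{eq:closeatt}, $x(0)=0$. Pre- and post-multiply the LMI by $[x(t)^\top~\zeta(t)^\top]$ and its transpose. This gives the dissipation inequality
\begin{equation*}
\tfrac{d}{dt}\bigl(x^\top P x\bigr) + \sum_{i\in\Vc} y_i^2 - \sum_{m_k\in\Mc}\gamma_{m_k} y_{m_k}^2 - \beta\,\zeta^\top\zeta \le 0 .
\end{equation*}
Integrate over $[0,T]$, divide by $T$, and use $x(0)=0$ and $x(T)^\top P x(T)\ge 0$ (from $P\succeq 0$). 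This yields
\begin{equation*}
\sum_i \|y_i\|^2_{\Lc_2} \le \beta\|\zeta\|^2_{\Lc_2} + \sum_{m_k}\gamma_{m_k}\|y_{m_k}\|^2_{\Lc_2} \le \beta A_e + \delta\sum_{m_k}\gamma_{m_k}.
\end{equation*}
Taking the supremum over $\zeta$ and the infimum over feasible multipliers gives $J^\star\le\hat J$.

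\textbf{Step 2: boundedness.} It remains to exhibit one feasible point. Here $-L\preceq 0$ is only marginally stable, with kernel $\mathrm{span}(\mathbf 1)$. The monitored outputs together with the energy bound, however, are enough to keep the cost finite. Concretely, I would pick $P=pI$ and large $\beta$, $\gamma$, and check the LMI by a Schur complement:
\begin{equation*}
-2pL + I - \sum_{m_k}\gamma_{m_k} e_{m_k}e_{m_k}^\top + \tfrac{p^2}{\beta}B(\Ac)B(\Ac)^\top \preceq 0 .
\end{equation*}
For this one needs $2pL+\Gamma_{\Mc}\succ 0$, where $\Gamma_{\Mc}=\sum_{m_k}\gamma_{m_k}e_{m_k}e_{m_k}^\top$. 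That holds for connected $\Gc$ and $\Mc\neq\emptyset$, since $\mathbf 1^\top\Gamma_{\Mc}\mathbf 1>0$. Scaling $\gamma$ and $p$ up, and then $\beta$, makes the inequality hold. Hence $\hat J<\infty$, and so $J^\star$ is bounded.

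\textbf{Step 3: no duality gap, $\hat J\le J^\star$.} This is the main obstacle. The primal problem has $n_s+1$ quadratic integral constraints, and the classical S-procedure is lossless only for one constraint. I would invoke the lossless S-procedure for time-invariant quadratic forms on $\Lc_{2e}$ signals generated by LTI dynamics (Megretski--Treil / Yakubovich). It applies because the set of achievable vectors $\bigl(\|\zeta\|^2,\|y_{m_k}\|^2,\sum_i\|y_i\|^2\bigr)$ is convex up to closure, since time shifts and concatenation of signals convexify it. A separating-hyperplane argument then yields nonnegative multipliers $\beta,\gamma_{m_k}$ such that
\begin{equation*}
\sum_i\|y_i\|^2 - \beta\|\zeta\|^2 - \sum_{m_k}\gamma_{m_k}\|y_{m_k}\|^2 \le \beta A_e + \delta\sum\gamma_{m_k} - \beta A_e - \delta\sum\gamma_{m_k} + \bigl(J^\star - \text{dual value}\bigr)
\end{equation*}
for all inputs. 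Since the left-hand side is homogeneous in $\zeta$, this forces the quadratic form to be nonpositive, which is a dissipativity condition. The KYP lemma converts that condition into existence of $P=P^\top$ satisfying the LMI. Nonnegativity $P\succeq 0$ comes from the storage function being a required-supply quantity with $x(0)=0$.

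Some care is needed in two places. First, the strict primal inequality and the requirement $\gamma_{m_k}>0$ must be handled as infima, which does not change the optimal value. Second, the KYP step with the marginally stable $-L$ should be carried out using the controllability/observability structure from Step 2. With these in place, the two values coincide and $J(\Ac,\Mc)$ is given by \eqref{eq:JwSDP}.
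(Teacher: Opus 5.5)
Your proof is correct in outline and shares the paper's backbone: pass to the S-procedure/Lagrangian dual of \eqref{eq:Jwsup}, then convert the resulting dissipation inequality with storage $x^\top P x$ into the LMI. Two pieces are argued differently.

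For boundedness, the paper argues from $-L\preceq 0$ (``stable'') plus finite attack energy. That does not control the zero mode along $\mathbf 1$: since $\mathbf 1^\top \dot x = n_a\zeta$, a finite-energy pulse permanently shifts the consensus value, so on $[0,\infty)$ the impact is infinite when nothing is monitored. Your route combines weak duality, obtained by integrating the LMI, with an explicit feasible point $P=pI$ built on $L+\sum_{m_k}e_{m_k}e_{m_k}^\top\succ 0$ for connected $\Gc$. This uses exactly the structure that makes the value finite, and it exposes the implicit hypothesis $\Mc\neq\emptyset$.

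For the computation, the paper identifies \eqref{eq:Jwsup} with its dual without comment and cites Trentelman--Willems. You separate the elementary bound $J^\star\le\hat J$ from the zero-duality-gap claim, and justify the latter with the lossless S-procedure for shift-invariant quadratic forms, which is the step the paper leaves implicit. The paper's version is shorter; yours is more complete.

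Three points in Step~3 still need tightening.

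(i) The displayed inequality should read $\sum_i\|y_i\|^2_{\Lc_2}-\beta\|\zeta\|^2_{\Lc_2}-\sum_{m_k}\gamma_{m_k}\|y_{m_k}\|^2_{\Lc_2}\le J^\star-\beta A_e-\delta\sum_{m_k}\gamma_{m_k}$. Normalizing the objective multiplier to one uses the fact that $\zeta=0$ meets both constraints with slack. Scaling $\zeta$ then gives nonpositivity of the left side. Evaluating at $\zeta=0$ gives $\beta A_e+\delta\sum_{m_k}\gamma_{m_k}\le J^\star$, which is what actually closes the gap; state it explicitly.

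(ii) The pair $(-L,B(\Ac))$ is single-input and generally uncontrollable; for example, the unweighted $K_N$ with $N\ge 3$ has a repeated Laplacian eigenvalue. In addition, $-L$ has an eigenvalue at $0$, so KYP cannot be quoted verbatim. Work on the reachable subspace $\mathcal R$, where the pair is controllable, so the time-domain (Willems) construction of a quadratic storage applies. The complement $\mathcal R^\perp$ is $L$-invariant because $L=L^\top$. Moreover $\mathbf 1\in\mathcal R$ because $\mathbf 1^\top B(\Ac)=n_a>0$, so $L\succ 0$ on $\mathcal R^\perp$. Then extend $P$ by $pI$ on $\mathcal R^\perp$ with $p$ large. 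Before doing so, raise every $\gamma_{m_k}$ by an arbitrarily small $\epsilon$ to absorb the coupling between $\mathcal R$ and $\mathcal R^\perp$ through $\sum_{m_k}\gamma_{m_k}e_{m_k}e_{m_k}^\top$.

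(iii) The time-shift convexification needs the horizon $[0,\infty)$. Deriving $P\succeq 0$ from the required supply, however, needs the dissipation inequality on every $[0,T]$ from $x(0)=0$, and the infinite-horizon inequality does not imply this in general. Fix one reading of the horizon and make these two steps consistent; the paper is silent on this as well.
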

\begin{proof}
    The networked control system \eqref{eq:closeatt} is stable due to the fact that matrix $-L$ is negative semi-definite. Meanwhile, the attack input $\zeta$ has bounded energy. As a result, the output performance $y_i$ also has bounded energy for all $i \in \Vc$, leading to the boundedness of \eqref{eq:Jwsup}.
    Next, the computation of \eqref{eq:Jwsup} is shown. Let us consider the dual form of \eqref{eq:Jwsup} described in the following:
    \begin{align}
        \inf_{\gamma_k > 0, \, \beta > 0} &\bigg[ \sup_{\zeta \, \in \, \, \mathcal{L}_{2e} } \sum_{i \, \in \, \mathcal{V}} \| y_{i} \|_{\mathcal{L}_2}^2 + \sum_{m_k \in \Mc} \gamma_k \big( \delta - \|y_{m_{k}}\|_{\mathcal{L}_2}^2 \big) + \beta \big( A_e - \| \zeta \|_{\mathcal{L}_2}^2 \big) 
        \bigg] \label{dual_form} \\
        \text{s.t.}~~~&~
        \eqref{eq:closeatt},\eqref{eq:outputrho},\eqref{eq:monitoreq},~ x(0) = 0. \non
    \end{align}
    The dual form \eqref{dual_form} is bounded if, and only if, 
    \begin{align}
        \sum_{i \, \in \, \mathcal{V}} \| y_{i} \|_{\mathcal{L}_2}^2 - \sum_{m_k \in \Mc} \gamma_k \|y_{m_{k}}\|_{\mathcal{L}_2}^2 - \beta \| \zeta \|_{\mathcal{L}_2}^2 \leq 0, ~ \forall \, \zeta \, \in \, \, \mathcal{L}_{2e}, ~ x(0) = 0.
    \end{align}
    As a result, the dual form can be rewritten as follows:
    \begin{align}
        J(\mathcal{A},\mathcal{M}) = \inf_{\gamma_k \,>\, 0, \, \beta \,>\, 0}& ~~ \delta \sum_{m_k \in \Mc} \gamma_k + A_e \beta  \label{J_inf} \\
        \text{s.t.}~~&~\sum_{i \, \in \, \mathcal{V}} \| y_{i} \|_{\mathcal{L}_2}^2 - \sum_{m_k \in \Mc} \gamma_k \|y_{m_{k}}\|_{\mathcal{L}_2}^2 - \beta \| \zeta \|_{\mathcal{L}_2}^2 \leq 0, \non \\
        &~\eqref{eq:closeatt},\eqref{eq:outputrho},\eqref{eq:monitoreq},~ x(0) = 0. \non 
    \end{align}
    Recalling the key results in dissipative system theory for linear systems \cite{trentelman1991dissipation} with a non-negative storage function $V(x) = x(t)^\top P x(t)$ where $P = P^\top \succeq 0$ and a supply rate $s(\cdot,\cdot) =  \sum_{m_k \in \Mc} \gamma_k \|y_{m_{k}}\|_{\mathcal{L}_2}^2 + \beta \| \zeta \|_{\mathcal{L}_2}^2 -\sum_{i \, \in \, \mathcal{V}} \| y_{i} \|_{\mathcal{L}_2}^2 $, the optimization problem \eqref{J_inf} can be computed by the SDP \eqref{eq:JwSDP}. \QEDB 
\end{proof}





Given that the adversary seeks to maximize the worst-case attack impact on the system, the defender selects a monitor set to minimize the worst-case attack impact. In practice, the defender seldom foresees when the adversary attacks the system. Thus, the defender should make a decision on their defense strategy before the adversary does. As a result, the optimal security allocation can be formulated as the following minimax optimization problem:
\begin{align}
    \min_{\mathcal{M},\,|\Mc| = n_s} ~ \bigg[ \max_{\mathcal{A},\, |\Ac|  = n_a} J(\mathcal{A},\mathcal{M}) \bigg].
\end{align}
The above optimal security allocation can be rewritten as follows:
\begin{equation}
\begin{aligned}
\min_{\mathcal{M},\,|\Mc| = n_s} \; & \: Q \\
 \text{s.t.}~~~&  J(\mathcal{A},\mathcal{M}) \leq Q, \: \forall \, \mathcal{A},\, |\Ac|  = n_a.
\label{eq:Q}
\end{aligned}
\end{equation}
In the next section, we discuss how to compute \eqref{eq:Q} and its approximate solution based on centrality measures.

\section{Security Allocation}
\label{sec:method}
This section outlines the methodology employed in this paper, encompassing all methods and approaches utilized to derive the results. The novel approach of selecting the monitoring set $\mathcal{M}$ utilizing centrality measures, and the approach of assessing monitoring sets. 

\subsection{Optimal security allocation}
Inspired by our previous work \cite[Proposition 1]{Tung2024security}, the defender is able to find the optimal security allocation by solving the combinatorial problem \eqref{eq:Q}. The following theorem states the computation of the optimization problem \eqref{eq:Q}.
\begin{Theorem}
    For each attack set $\Ac$, let us denote a tuple variable $(\bar z_\Ac,P_\Ac) \in \Rbb^{N} \times \Sbb^N$ correspondingly. Denote an $N$-dimensional binary vector $z_\Mc \in \{0,\,1\}^N$ as a representation of the monitor set $\Mc$ where $m$-entry of $z_\Mc$ being equal to $1$ indicates that $m$-th node belongs to $\Mc$. Suppose that the adversary finds an attack set $\Ac$ such that it maximizes the worst-case impact of stealthy FDI attacks \eqref{eq:Jwsup}. Then, the optimal security allocation \eqref{eq:Q} is determined by $z_\Mc^\star$ which is the solution to the following mixed-integer SDP problem:
    \begin{align} 
        &\min_{\bar z_\mathcal{A} \, \in \, \Rbb^N, \,z_{\mathcal{M}} \in \{0,1\}^N, \, P_\mathcal{A} = P^\top_\Ac \succeq 0, \, \beta \, > \, 0, \, Q \, > \, 0}~~ Q \label{mixedintegersdpoptimal} \\
        &\text{s.t.}~~~
        \textbf{1}_{N}^\top z_{\mathcal{M}} = n_{s},~
        \delta \textbf{1}_N^\top \bar z_\mathcal{A} + A_{e}\, \beta \leq Q, ~
        0 \leq \bar z_\mathcal{A} \leq \tilde M \, z_{\mathcal{M}}, ~ |\Ac|  = n_a, \non \\
        &
        \ba{cc}
        - L^\top P_\mathcal{A} - P_\mathcal{A} L ~& ~~P_\mathcal{A} B({\Ac}) \\
        B({\Ac})^\top P_\mathcal{A} & -\beta_{\Ac}
        \ea + \textbf{diag} \left( \ba{c} I \\ 0 \ea \right)  
        - \textbf{diag} \left( \ba{c} \bar z_\mathcal{A} \\ 0 \ea \right) \leq 0,~
        \forall \, \mathcal{A},  \non 
\end{align}
where $\textbf{1}_N$ is an $N$-dimensional all-one vector, $n_s$ is the sensor budget, $n_a$ is the attack budget, $\delta$ is the given alarm threshold, $A_e$ is the maximum attack energy, and $\tilde M$ is a given large positive number, also called a ``big M"  \cite{bigM}. 
\end{Theorem}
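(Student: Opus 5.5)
The approach is to take the epigraph form \eqref{eq:Q} and substitute the SDP characterization of $J(\Ac,\Mc)$ from the previous theorem for each admissible attack set $\Ac$. We then encode the combinatorial choice of $\Mc$ through the binary vector $z_\Mc$ and big-M constraints. The proof reduces to showing that the feasible $(z_\Mc,Q)$ pairs of \eqref{mixedintegersdpoptimal} coincide with those of \eqref{eq:Q}. Then both problems have the same optimal value and the same minimizing $z_\Mc^\star$.

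\emph{Step 1: per-attack-set reformulation.} Fix $\Mc$ and $\Ac$. By the previous theorem, $J(\Ac,\Mc)\le Q$ holds if and only if there exist $P_\Ac\succeq0$, $\beta_\Ac>0$ and $\gamma_{m_k}>0$, $m_k\in\Mc$, such that the LMI in \eqref{eq:JwSDP} holds and $A_e\beta_\Ac+\delta\sum_{m_k\in\Mc}\gamma_{m_k}\le Q$. The forward direction needs care because the minimum in \eqref{eq:JwSDP} must be attained, or else the infimum is handled by an $\epsilon$-argument with $Q$ replaced by $Q+\epsilon$; this is harmless for the epigraph minimization.

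Next, define $\bar z_\Ac\in\Rbb^N$ by $(\bar z_\Ac)_m=\gamma_m$ if $m\in\Mc$ and $0$ otherwise. Then $\sum_{m_k}\gamma_{m_k}e_{m_k}e_{m_k}^\top=\textbf{diag}(\bar z_\Ac)$, and $\sum_{i\in\Vc}e_ie_i^\top=I$. So the LMI of \eqref{eq:JwSDP} becomes exactly the matrix constraint of \eqref{mixedintegersdpoptimal}, and the cost becomes $\delta\textbf{1}_N^\top\bar z_\Ac+A_e\beta_\Ac$.

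\emph{Step 2: big-M encoding of the support.} The constraint $0\le\bar z_\Ac\le\tilde M z_\Mc$ forces $(\bar z_\Ac)_m=0$ whenever $m\notin\Mc$, so only monitored vertices carry multipliers. This gives the inclusion that every feasible point of \eqref{mixedintegersdpoptimal} yields a feasible point of \eqref{eq:Q}. Here the strict positivity $\gamma>0$ versus $\bar z\ge0$ is reconciled by continuity: raising a zero entry to a small $\epsilon>0$ only makes the LMI more negative and changes the cost by $O(\epsilon)$.

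Conversely, take an optimal $\Mc$ for \eqref{eq:Q} and the multipliers certifying $J(\Ac,\Mc)\le Q$ for every $\Ac$. These give a feasible point of \eqref{mixedintegersdpoptimal} with $z_\Mc$ the indicator of $\Mc$, provided $\tilde M$ exceeds every optimal $\gamma_{m_k}$. I expect this to be the main obstacle, since the statement only says $\tilde M$ is ``large''. I would argue that the optimal multipliers are bounded, for instance by $\gamma_{m_k}\le J(\Ac,\Mc)/\delta$. That bound follows from the cost, since $A_e\beta\ge0$, together with the boundedness of $J$ established in the previous theorem. Taking $\tilde M$ larger than $\max_{\Ac,\Mc}J(\Ac,\Mc)/\delta$ then suffices.

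\emph{Step 3: collecting the constraints.} Imposing Step 1 for every $\Ac$ with $|\Ac|=n_a$, each with its own tuple $(\bar z_\Ac,P_\Ac,\beta_\Ac)$ but a shared $Q$ and $z_\Mc$, reproduces the universally quantified constraint of \eqref{eq:Q}. The constraint $\textbf{1}_N^\top z_\Mc=n_s$ encodes $|\Mc|=n_s$. Minimizing $Q$ over the joint variables therefore gives $\min_\Mc\max_\Ac J(\Ac,\Mc)$. The minimizing $z_\Mc^\star$ is the indicator of an optimal monitor set, which proves the claim.
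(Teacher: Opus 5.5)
Your proposal is correct and is essentially the argument the paper delegates to \cite[Proposition~1]{Tung2024security}. You substitute the SDP of Theorem~1 into the epigraph form \eqref{eq:Q} separately for each attack set, with its own $(\bar z_\Ac,P_\Ac,\beta_\Ac)$ (reading the statement's $\beta$ as $\beta_\Ac$, which is what makes the reformulation exact), and encode the multipliers supported on $\Mc$ through $0\le\bar z_\Ac\le\tilde M z_\Mc$. Your explicit sufficient choice $\tilde M>\max_{\Ac,\Mc}J(\Ac,\Mc)/\delta$ and the $\epsilon$-handling of the strict inequalities make precise what the paper leaves implicit in calling $\tilde M$ merely ``large''.
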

\begin{proof}
    The proof directly follows \cite[Proposition 1]{Tung2024security}.
    \QEDB
\end{proof}

\subsection{Centrality-based security allocation} 
\label{sec:centralitymeasures}
The methodology of selecting monitor sets based on the centrality measures including degree, betweenness, and closeness centrality relies on the idea that the most connected or influential vertices in the graph would be the most suitable monitor vertices. 
Therefore, we create all possible monitor sets for each centrality measure given the monitor budget $n_{s}$. Then, we derive the total centrality for each monitor set by summing up the calculated centrality measure for each vertex. Finally, we select the monitor set with the highest total centrality according to the specific centrality measure. Given the sensor budget, this ensures that the vertices with the highest centrality are selected as the monitor set. This process, while straightforward, often results in duplicate solutions where different monitor sets may have the same total centrality. For example, if the sensor budget is two and there are three vertices with the highest and identical centrality, there must be multiple monitor sets with the same highest total centrality. In these cases, all monitor sets are evaluated by solving \eqref{eq:JwSDP} and selecting the monitor set with the smallest WCAI. This method is applied to all three centrality measures which are degree centrality, closeness centrality, and betweenness centrality measures. Let us make use of the following definition of the above-mentioned centrality measures.
\begin{Definition}
    [Centrality measures \cite{Golbeck2013Chapter3}] Given a graph $\Gc \triangleq (\Vc,\, \Ec, \, A)$ and $d(i,\,j)$ as the shortest path from node $i$ to node $j$, the degree centrality measure of node $i \in \Vc$, denoted as $C_D(i)$, is the total number of connections between node $i$ and its neighbors, i.e.,
    \begin{align}
        C_D(i) \triangleq \Delta_i = \sum_{j \, \in \, \Nc_i} a_{ij}.
    \end{align}
    The closeness centrality measure of node $i \in \Vc$, denoted as $C_C(i)$, is the inverse-average shortest path length from the vertex to all the other vertices, i.e.,
    \begin{align}
        C_C(i) \triangleq \frac{N-1}{\sum_{j \neq i}d(i,\, j)}.
    \end{align}
    The betweenness centrality measure of node $i \in \Vc$, denoted as $C_B(i)$, is the number of shortest paths that pass through vertex $i$ of interest among all shortest paths between pairs of vertices in the graph, i.e.,
    \begin{align}
        C_B(i) \triangleq \sum_{s \neq i \neq t} \frac{\sigma_{st}(i)}{\sigma_{st}},
    \end{align}
    where $\sigma_{st}$ is the total number of shortest paths from vertex $s$ to $t$ and $\sigma_{st}(i)$ is the number of shortest paths from vertex $s$ to $t$ that pass through vertex $i$.
\end{Definition}

Moreover, a combined centrality measure approach is considered where the three monitor sets (obtained from the respective centrality metrics) are first compared to each other, and the one with the smallest WCAI is selected and then compared with the optimal solution. The combined method allows us to ascertain if utilizing only one of these centrality measures would generally give a larger gap to the optimal solution than using all three and always picking the best one.

Let us denote the set of monitoring vertices founded by degree centrality measure as $z_\Mc^d$, closeness centrality measure as $z_\Mc^c$, betweenness centrality measure as $z_\Mc^b$, combined centrality measure as $z_\Mc^o$.
After we obtain the set of monitoring vertices that has the highest centrality measures, denoted $z_\Mc^\dagger \in \{0,\,1\}^N$ where $\dagger \in \{d,\,c,\,b,\,o\}$, we can compute the worst-case impact of stealthy FDI attacks by replacing the known value $z_\Mc^\dagger$ with the binary variable $z_\Mc$ in \eqref{mixedintegersdpoptimal}. It is worth noting that the optimization problem \eqref{mixedintegersdpoptimal} with $z_\Mc^\dagger$ is an SDP problem and can be solved much faster than the mixed-integer version presented in \eqref{mixedintegersdpoptimal} with $z_\Mc$ as a binary variable.

\section{Results}
\label{sec:result}
This section describes the empirical results of centrality measure-based security allocation in networked control systems. In the first part, we show experimental results on Erdős–Rényi random graphs where an edge is included to connect two vertices with a probability of 0.5. In the second part, we present how the centrality measure assists the security allocation in an IEEE benchmark for power systems (IEEE 14-bus system).

\subsection{Erdős–Rényi random graphs}
All the experiments were performed using Matlab 2023b
with YALMIP 2021 \cite{lofberg2004yalmip} and MOSEK solver on a personal computer with 2.9-GHz, 8-core Intel i7-10700 processor and 16 GB of RAM. Due to the limited hardware, the size of networks is restricted from $10$ to $20$. It is worth noting that not only the size of the network influence the complexity of the security allocation problem \eqref{mixedintegersdpoptimal}, but the attack scenarios also significantly contribute to its complexity. More specifically, given the attack budget $n_a$, the number of attack scenarios is computed by $\binom{N}{n_a}$, significantly scaling up with the network size and the attack budget. Therefore, we limit our experiment to the network size $N = \{10, \, 12, \, 14, \, 16, \, 18, \, 20 \}$, the attack budget $n_a = \{ 1 , \, 2 \}$, and the monitor budget $n_s = 1$. For each network size, we did the experiment with $30$ Erdős–Rényi random graphs.

The experimental results are shown in Figures~\ref{fig:wcai10_14}-\ref{fig:wcai16_20} for the relative gaps in WCAI and Figures~\ref{fig:time10_14}-\ref{fig:time16_20} for the relative gaps in solving time. As seen in Figure~\ref{fig:wcai10_14}, monitor sets chosen based on combined centrality and betweenness centrality measures perform very well where their median and 75 percentile values are kept under 10\%. In particular, 75 percentile values of the combined centrality measure in case of $2$ attack budget stay at zero. Regarding solving time in Figure~\ref{fig:time10_14}, security allocation based on combined centrality measures reduces around 70\% solving time while the other centrality measures decrease approximately 90\% solving time compared with solving the optimal value. When we increase the size of the network up to 20 depicted in Figure~\ref{fig:wcai16_20}, the relative gaps in WCAI increase in all centrality measures, particularly degree and closeness centrality measures. However, the numbers for combined centrality and betweenness centrality measures are still kept under 15\%. The relative gaps in solving time are almost the same in the case of smaller networks (see Figure~\ref{fig:time16_20}). 
\begin{figure}[!t]
    \centering
    \begin{subfigure}{0.49\textwidth}
        \includegraphics[width=\textwidth]{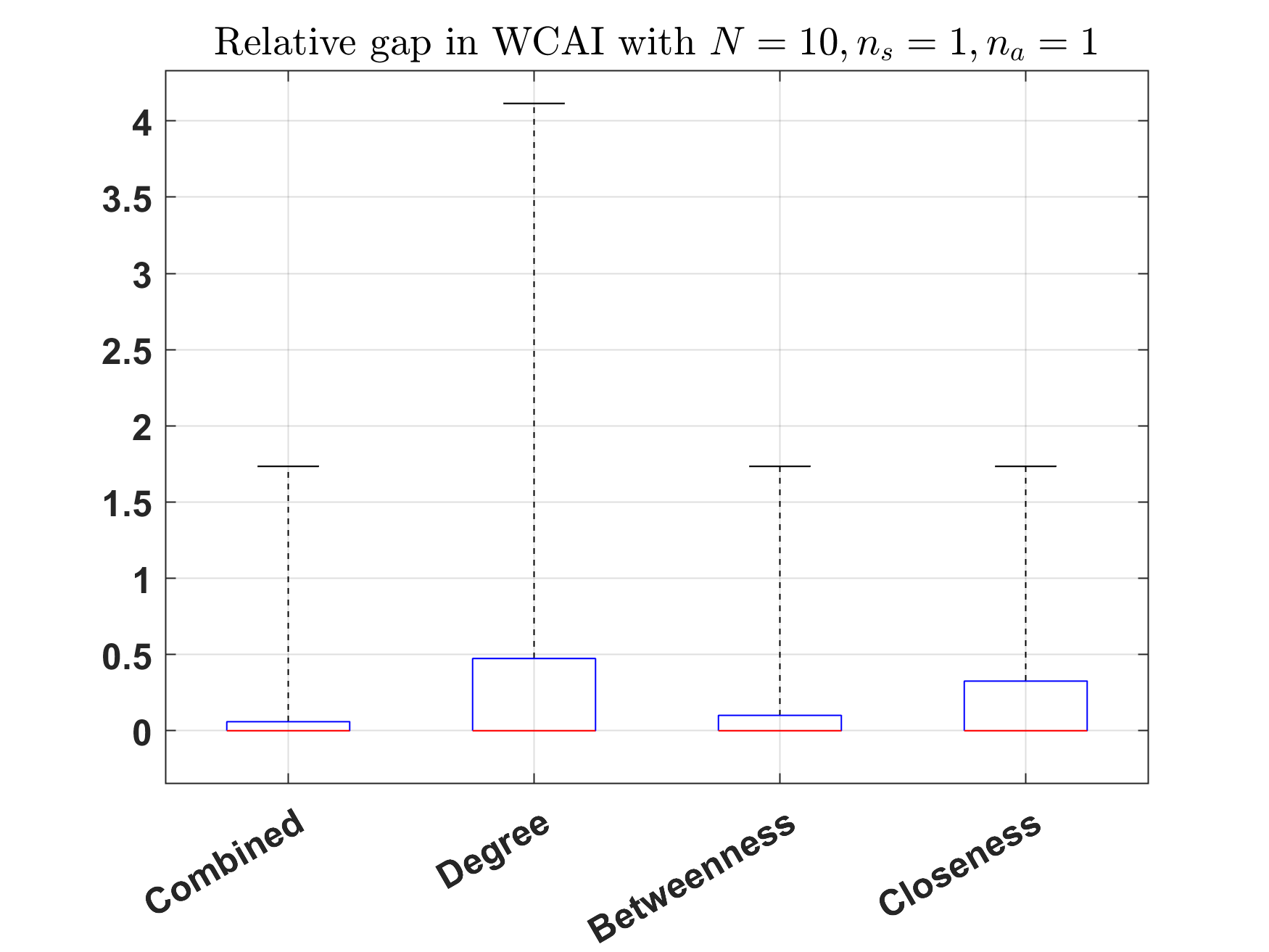}
    \end{subfigure}
    \begin{subfigure}{0.49\textwidth}
        \includegraphics[width=\textwidth]{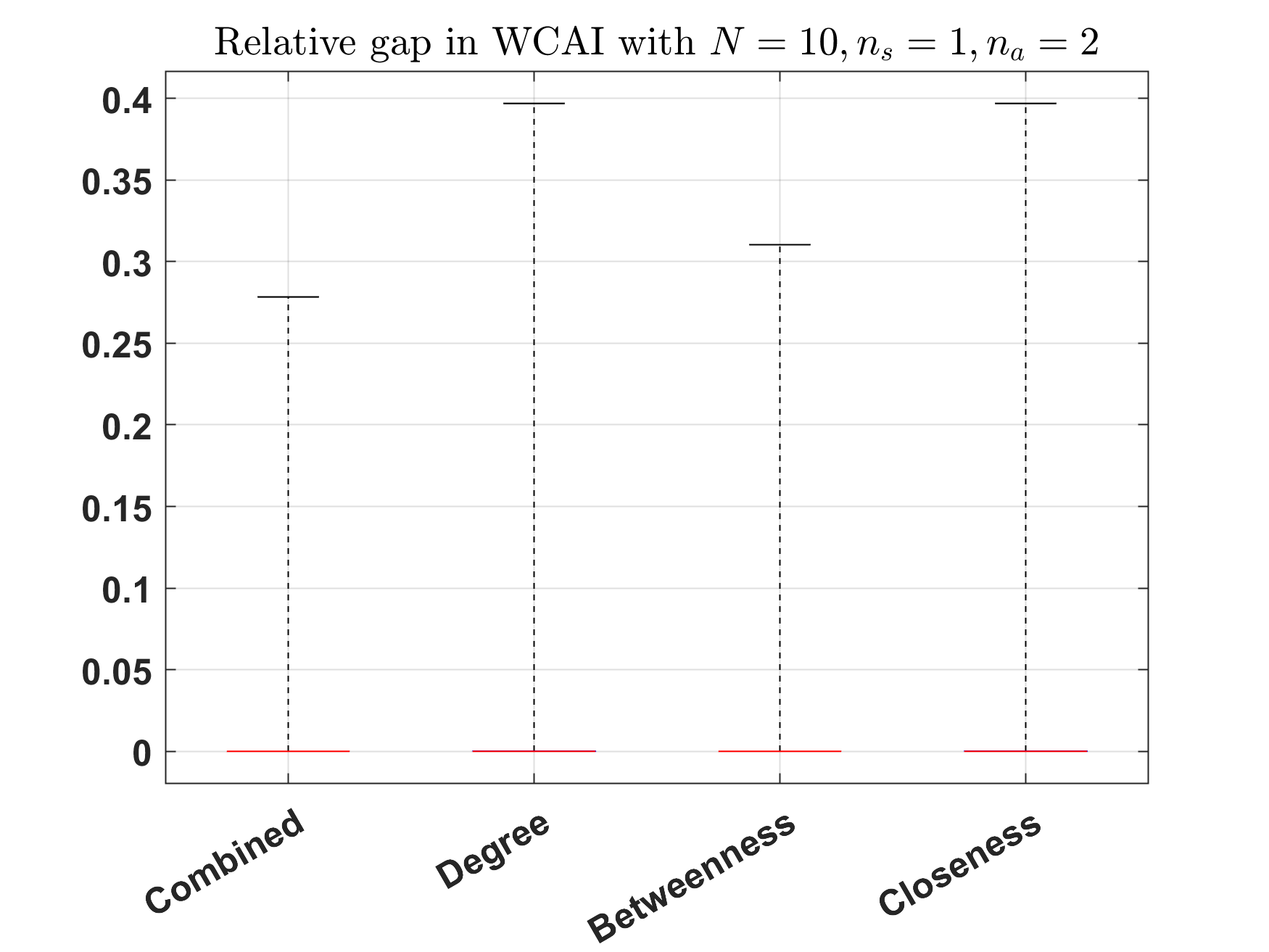}
    \end{subfigure}
    \begin{subfigure}{0.49\textwidth}
        \includegraphics[width=\textwidth]{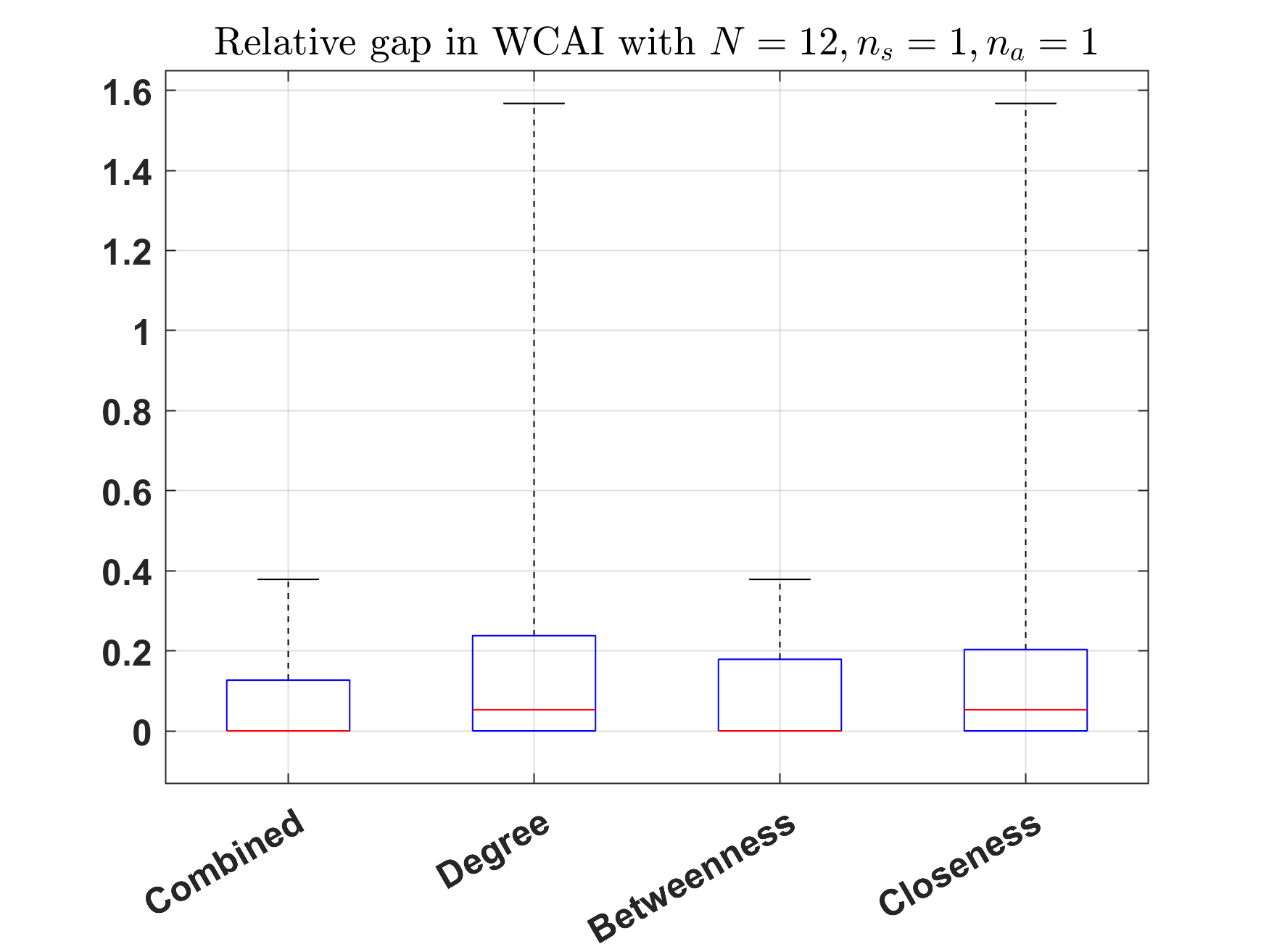}
    \end{subfigure}
    \begin{subfigure}{0.49\textwidth}
        \includegraphics[width=\textwidth]{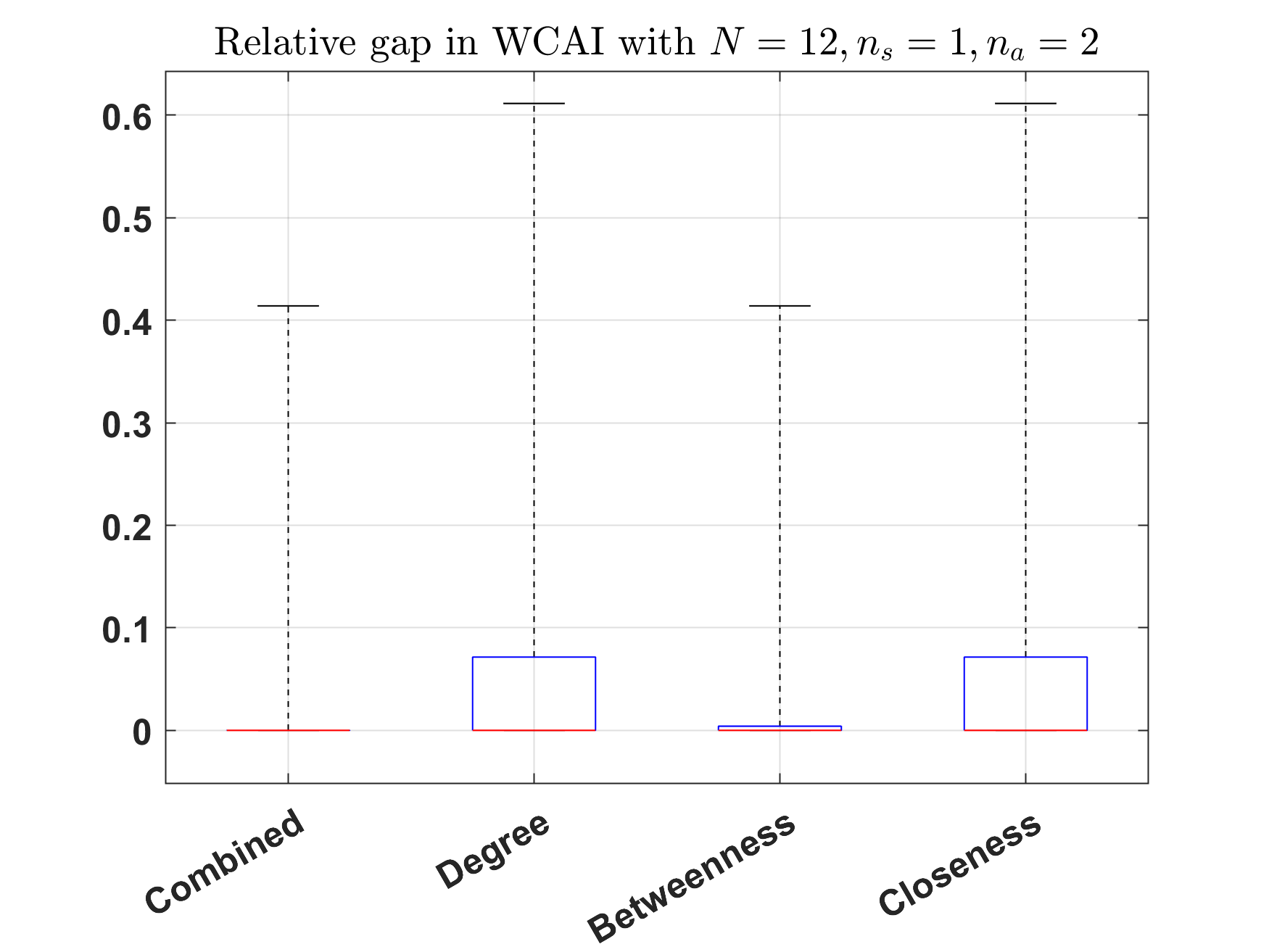}
    \end{subfigure}
    \begin{subfigure}{0.49\textwidth}
        \includegraphics[width=\textwidth]{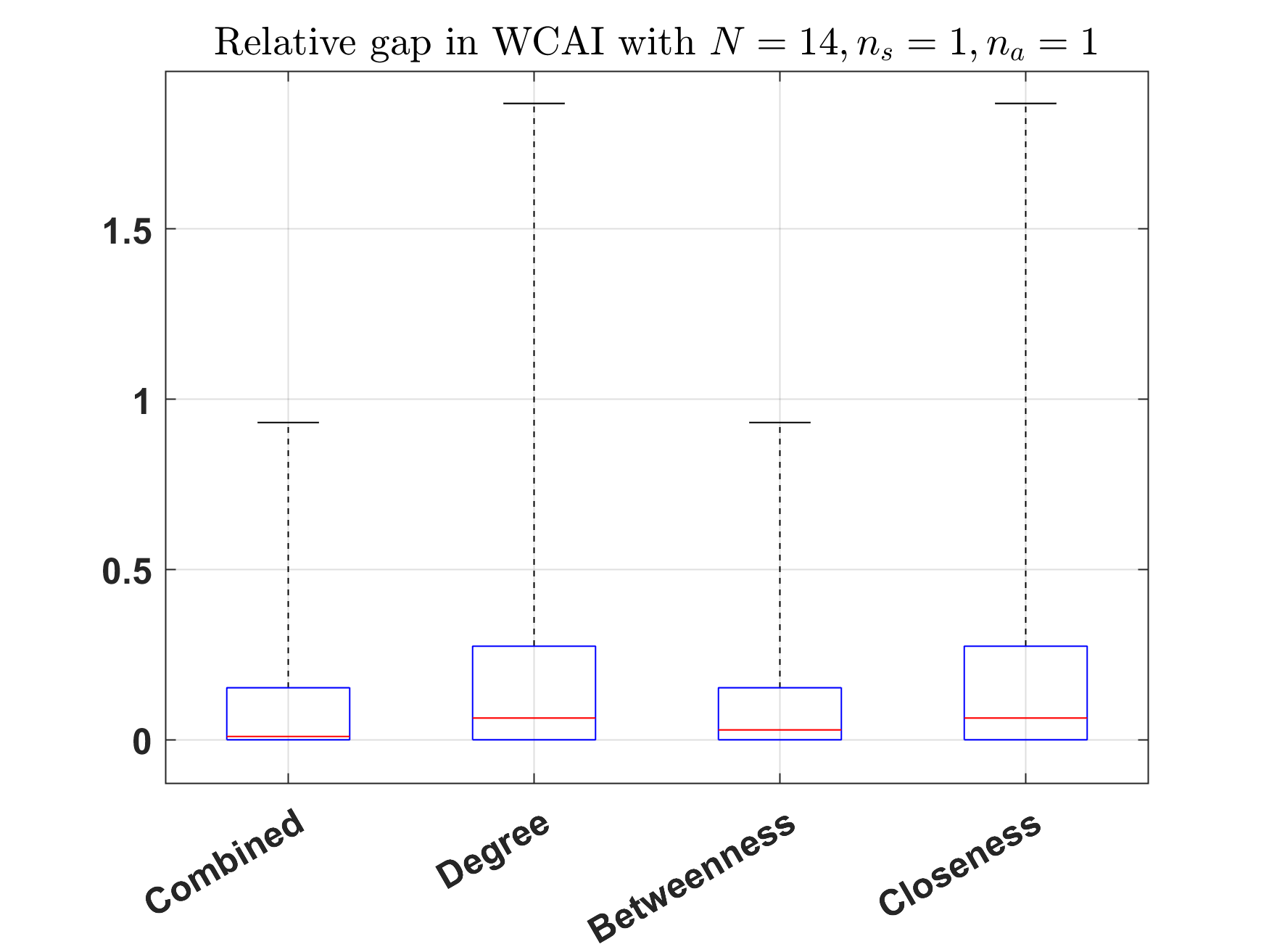}
    \end{subfigure}
    \begin{subfigure}{0.49\textwidth}
        \includegraphics[width=\textwidth]{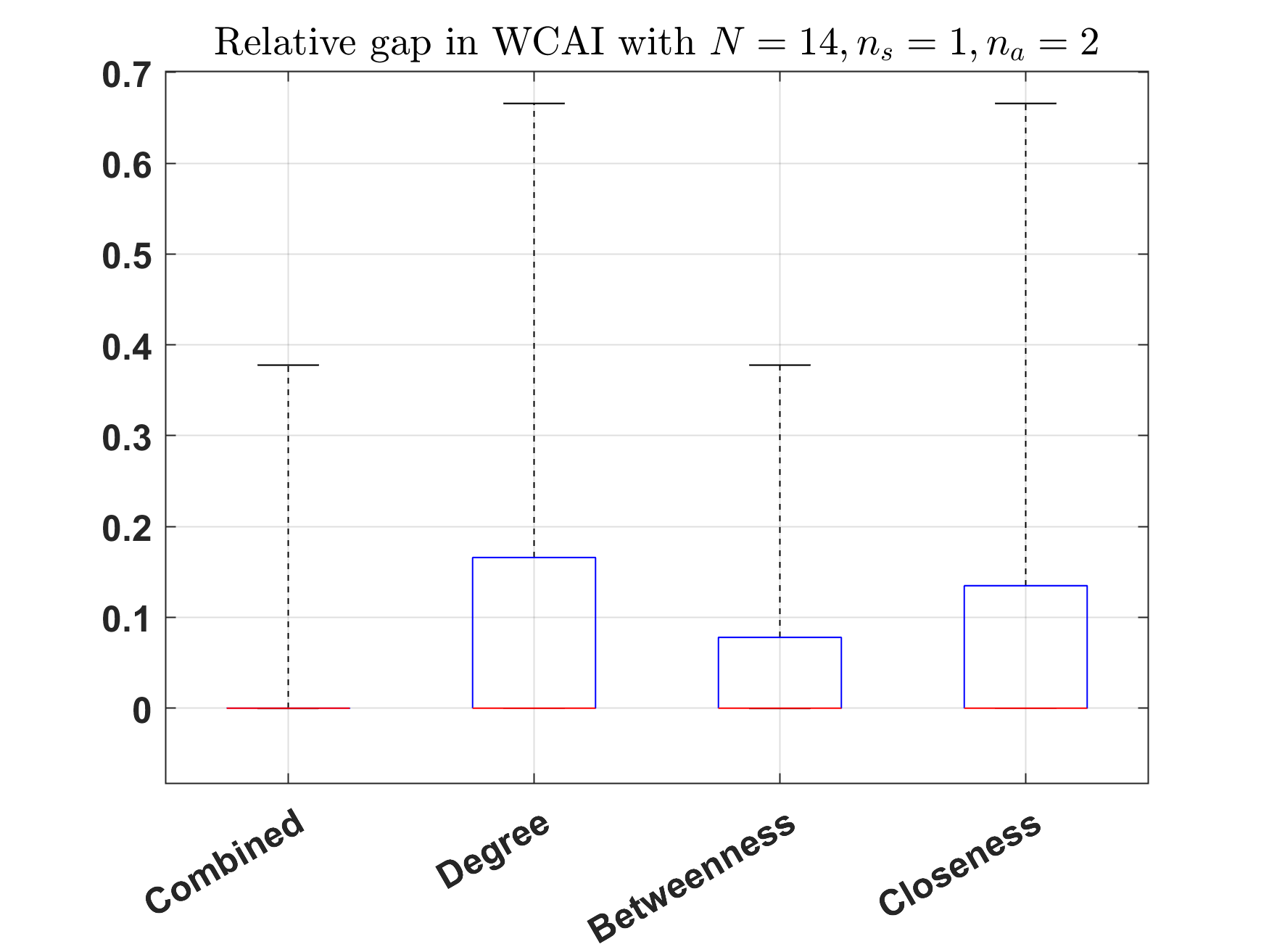}
    \end{subfigure}
    \caption{Relative gap of the worst-case attack impact between the optimal value and the value given by choosing monitor vertices based on centrality measures. The network size $N = \{ 10, \, 12, \, 14 \}$, the attack budget $n_a = \{1,\,2\}$, and the monitor budget $n_s = 1$. For each network size, $30$ Erdős–Rényi random graphs where an edge is included to connect two vertices with a probability of 0.5.}
    \label{fig:wcai10_14}
\end{figure}
\begin{figure}[!t]
    \centering
    \begin{subfigure}{0.49\textwidth}
        \includegraphics[width=\textwidth]{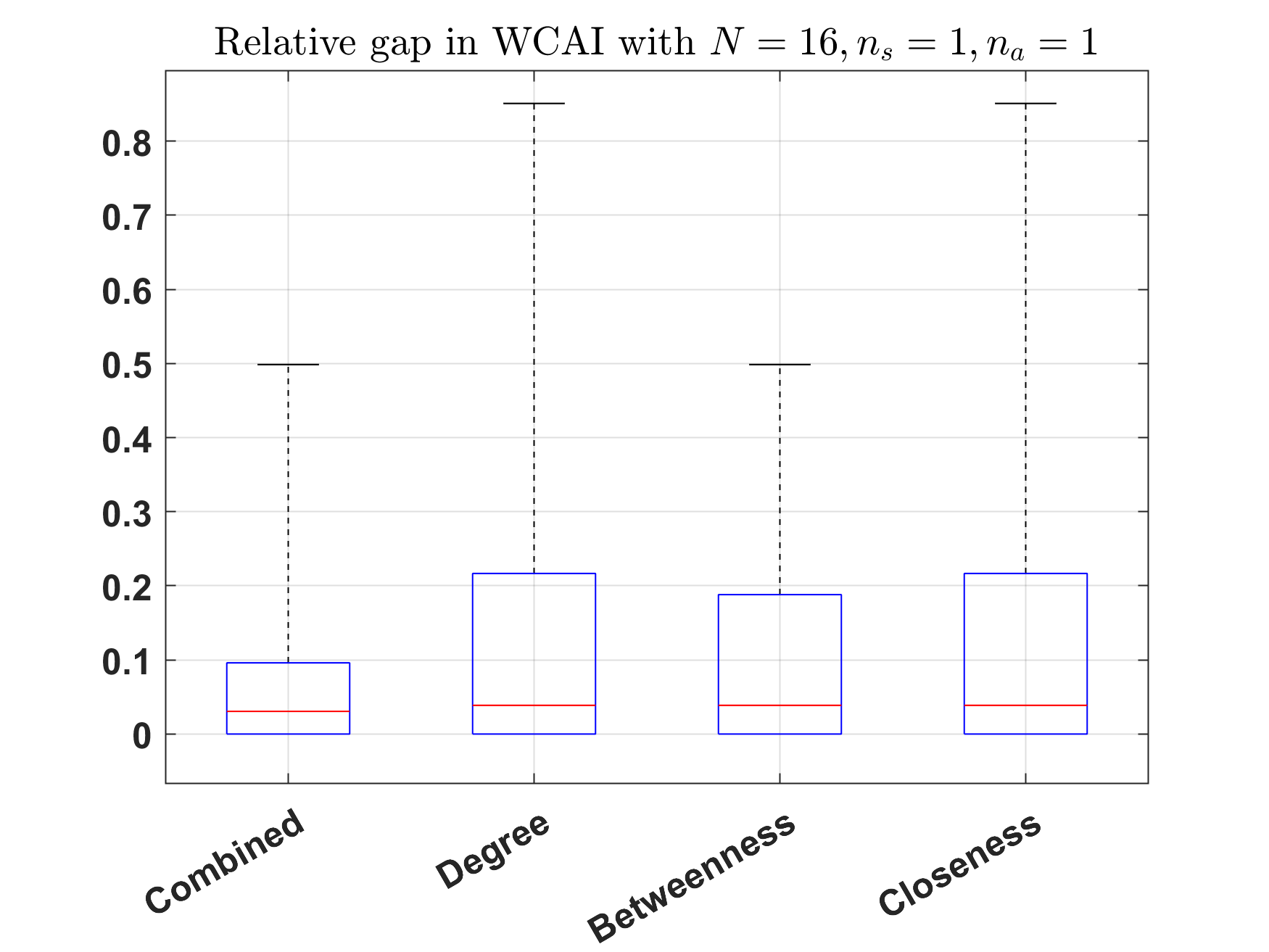}
    \end{subfigure}
    \begin{subfigure}{0.49\textwidth}
        \includegraphics[width=\textwidth]{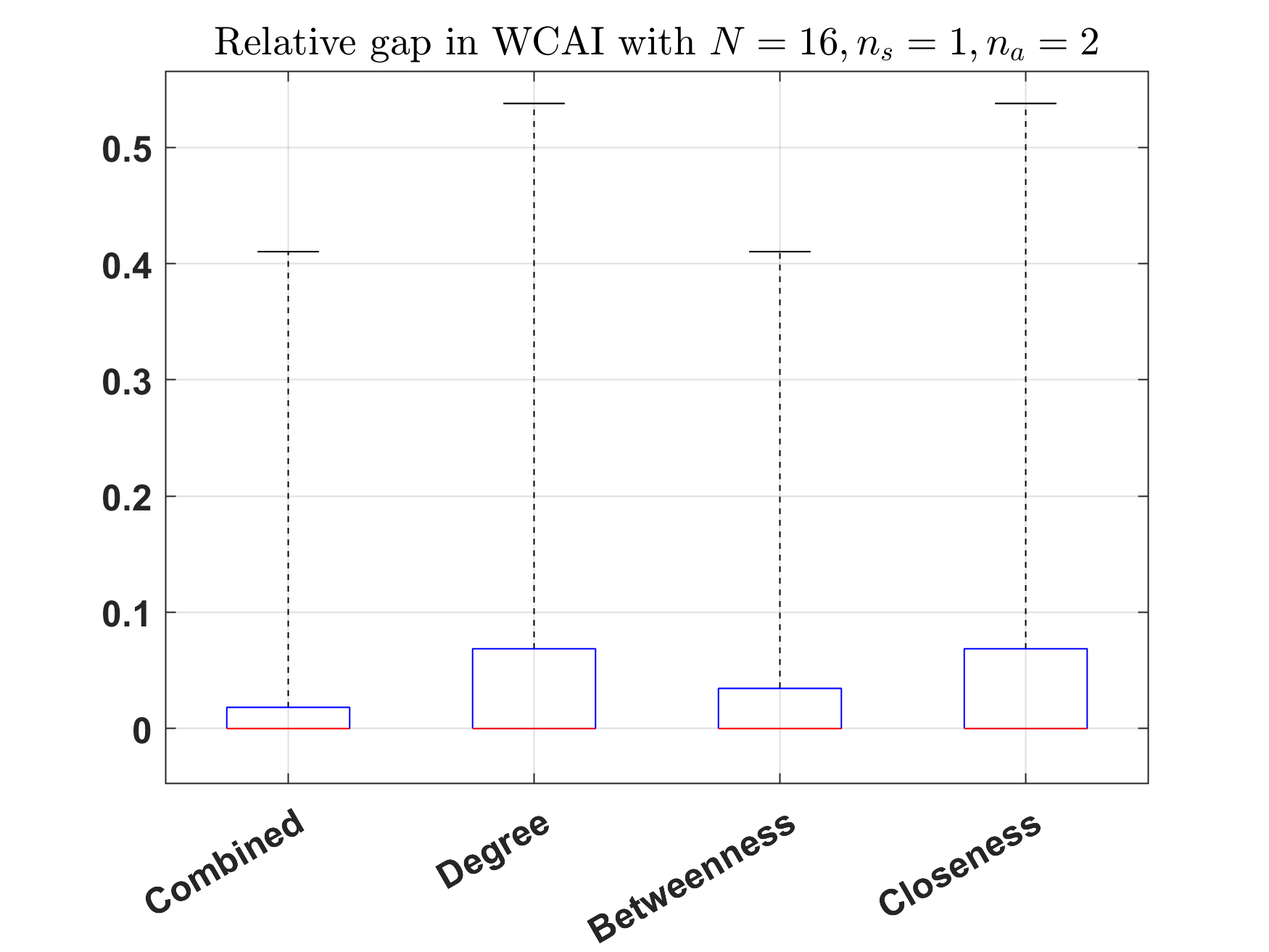}
    \end{subfigure}
    \begin{subfigure}{0.49\textwidth}
        \includegraphics[width=\textwidth]{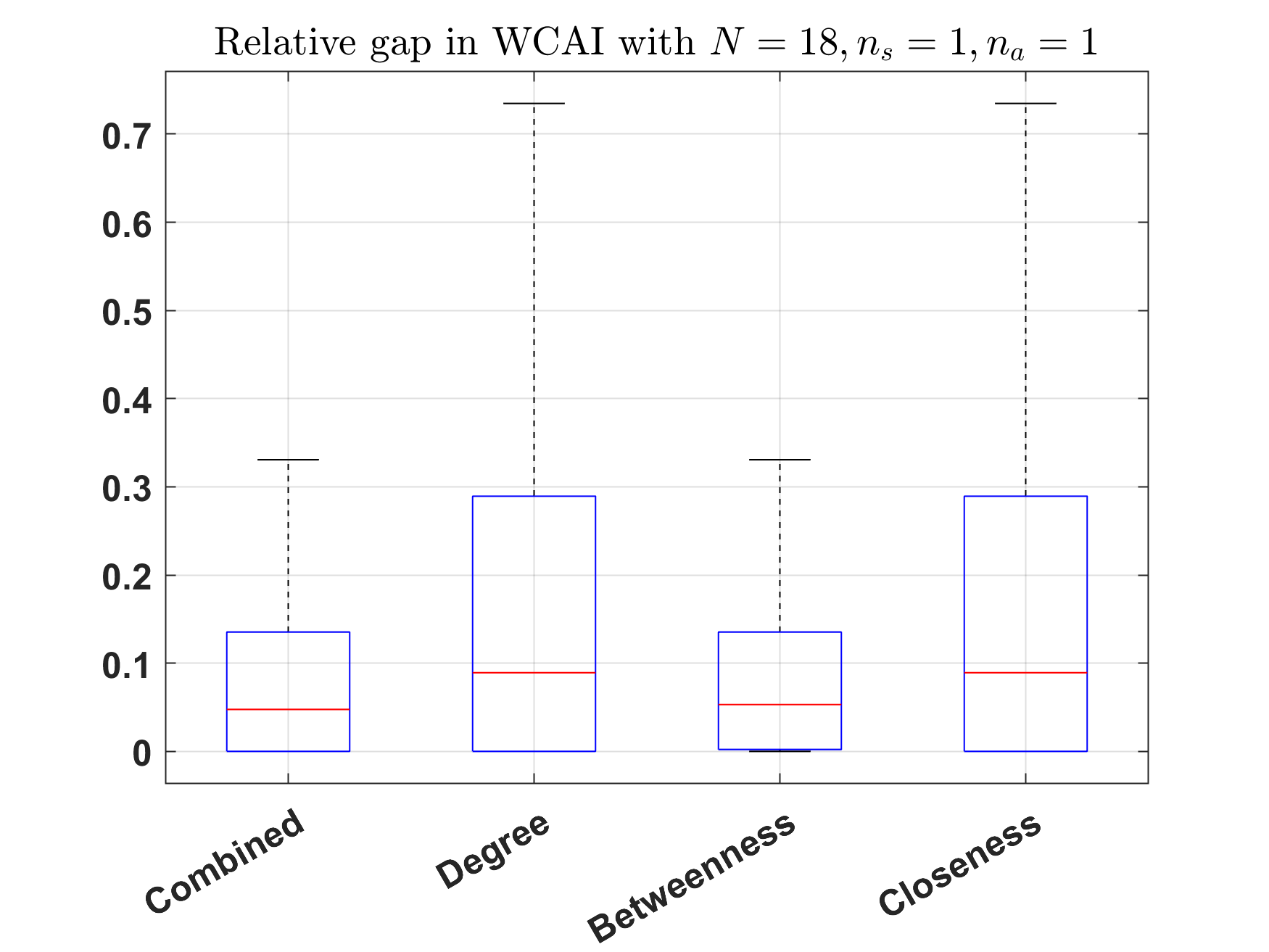}
    \end{subfigure}
    \begin{subfigure}{0.49\textwidth}
        \includegraphics[width=\textwidth]{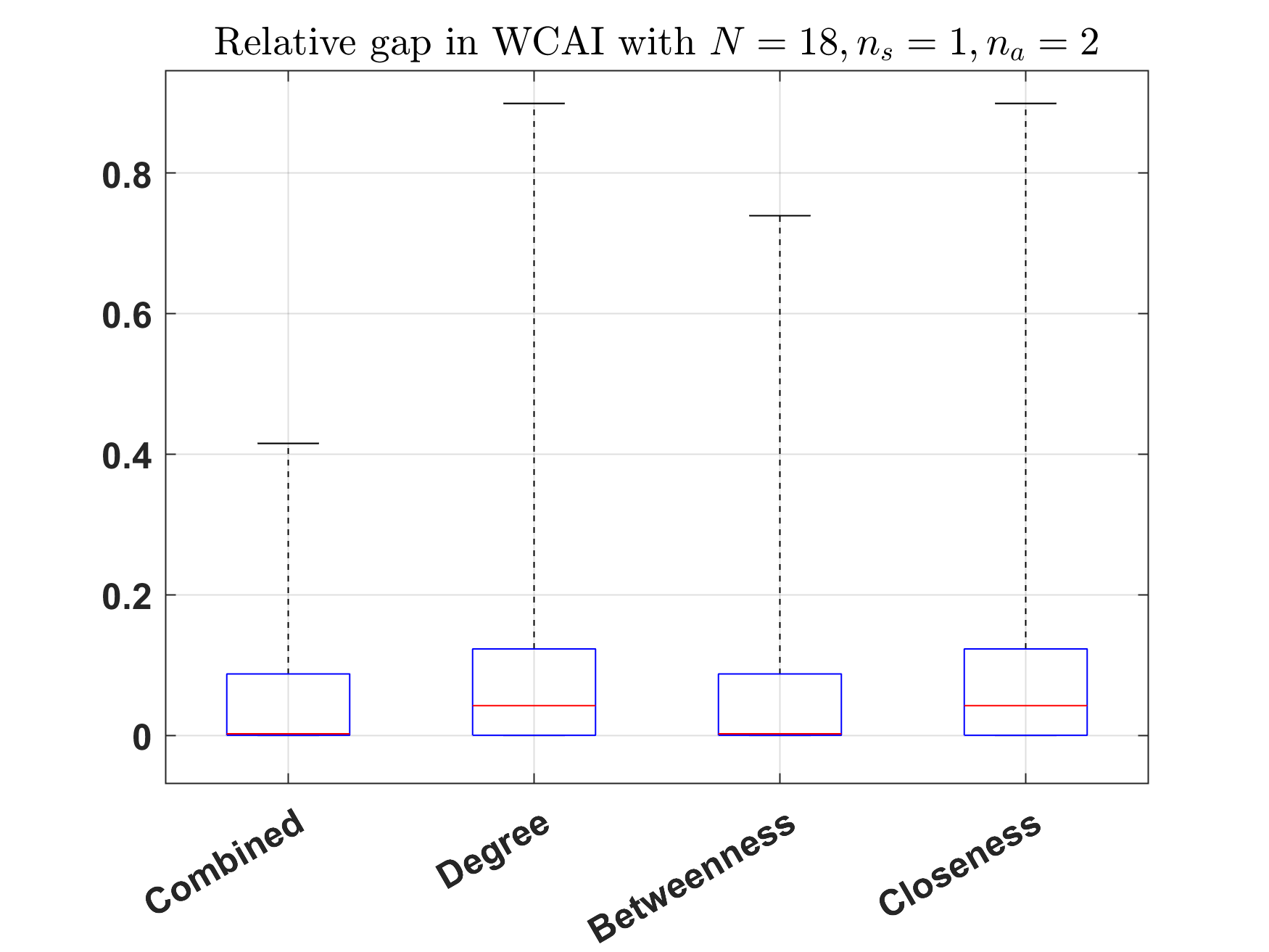}
    \end{subfigure}
    \begin{subfigure}{0.49\textwidth}
        \includegraphics[width=\textwidth]{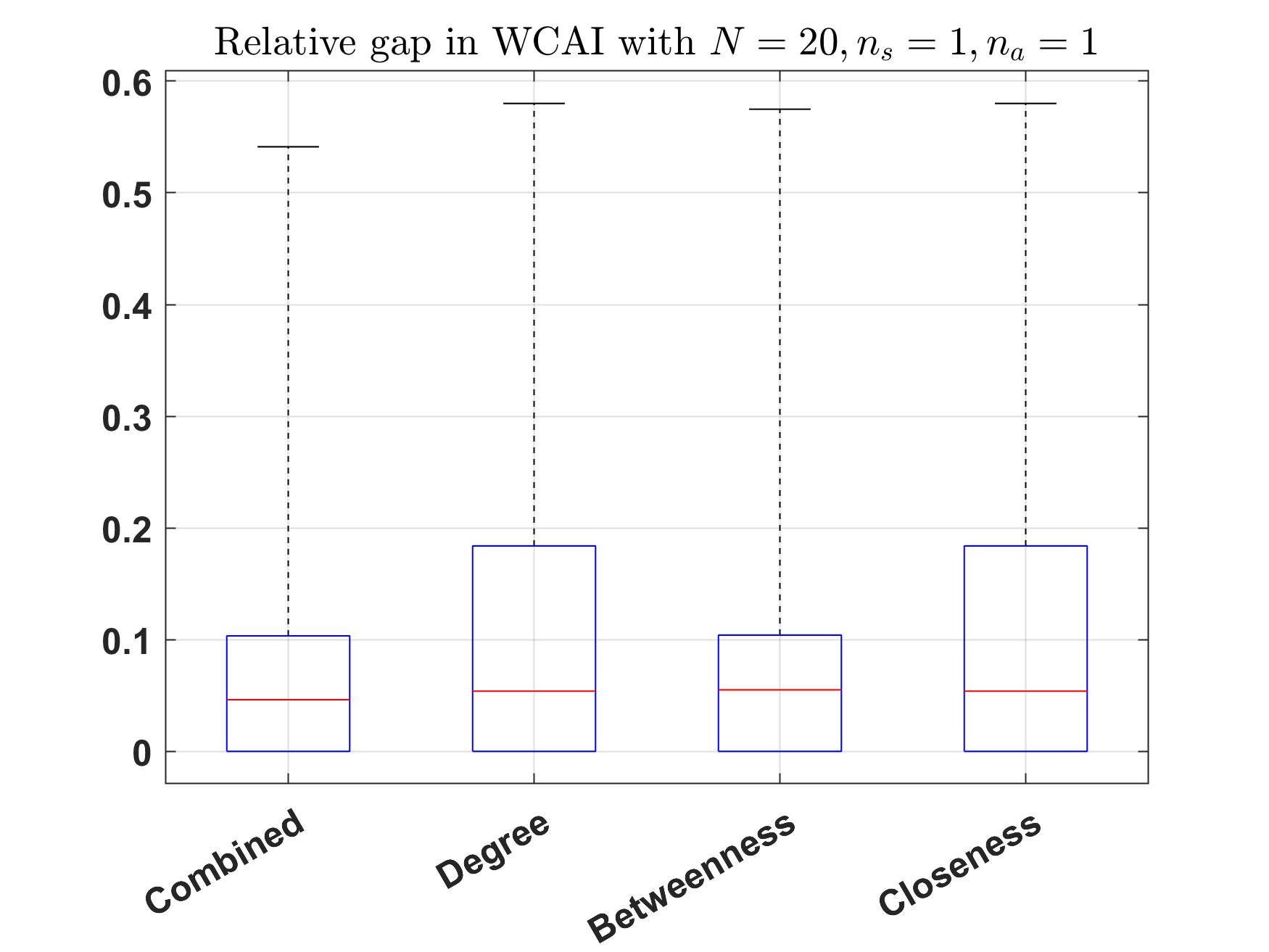}
    \end{subfigure}
    \begin{subfigure}{0.49\textwidth}
        \includegraphics[width=\textwidth]{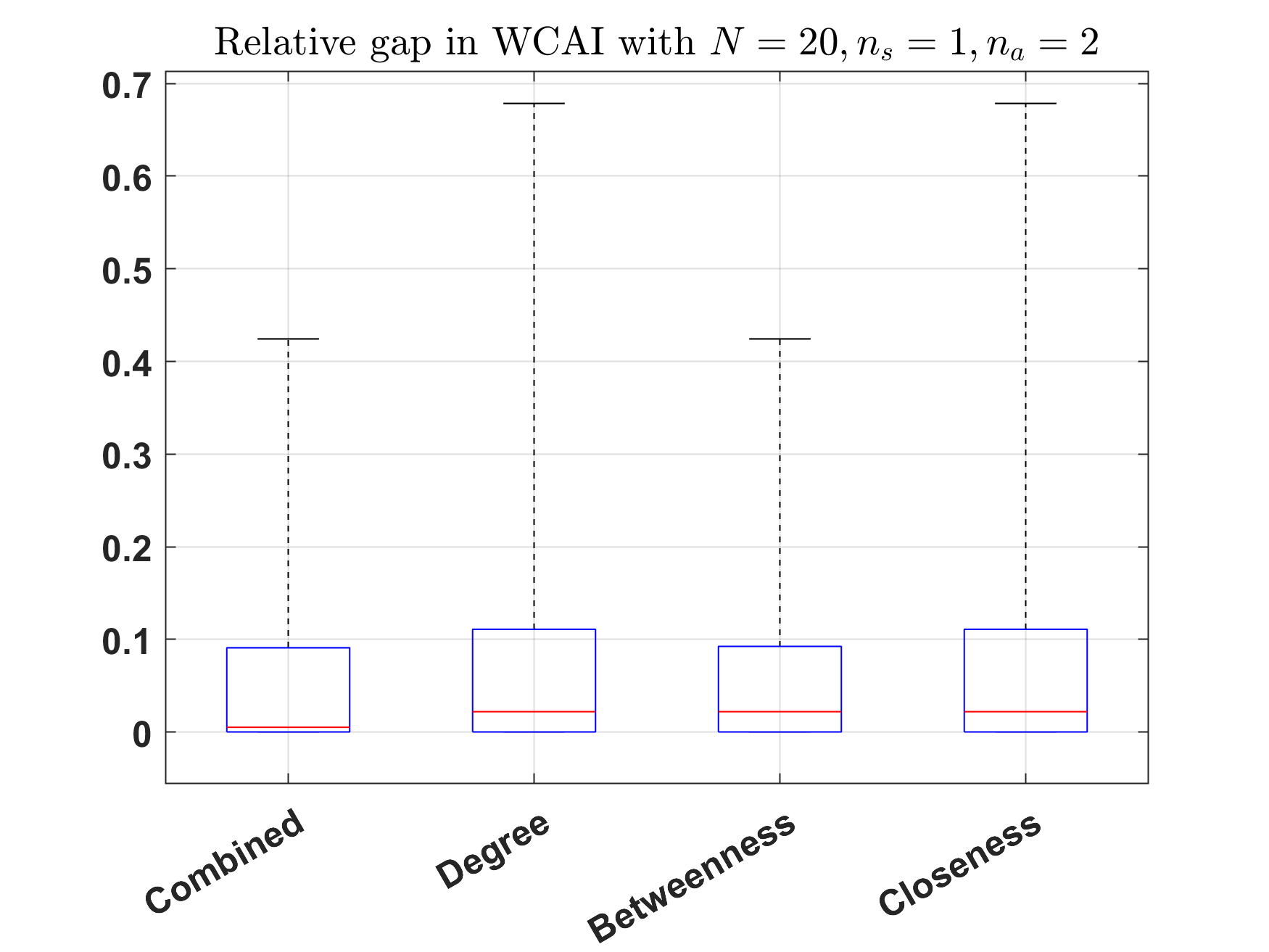}
    \end{subfigure}
    \caption{Relative gap of the worst-case attack impact between the optimal value and the value given by choosing monitor vertices based on centrality measures. The network size $N = \{ 16, \, 18, \, 20 \}$, the attack budget $n_a = \{1,\,2\}$, and the monitor budget $n_s = 1$. For each network size, $30$ Erdős–Rényi random graphs where an edge is included to connect two vertices with a probability of 0.5.}
    \label{fig:wcai16_20}
\end{figure}
\begin{figure}[!t]
    \centering
    \begin{subfigure}{0.49\textwidth}
        \includegraphics[width=\textwidth]{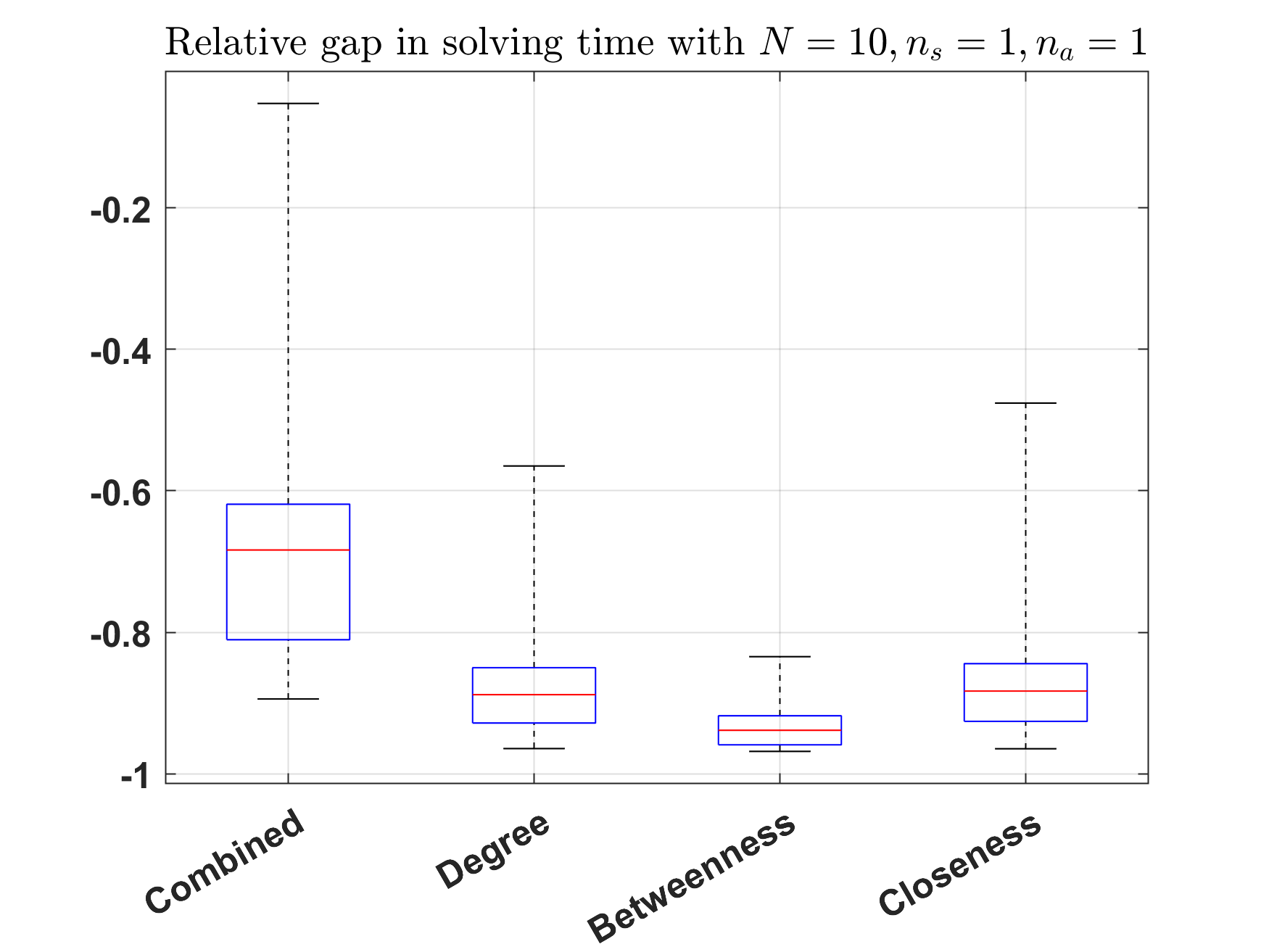}
    \end{subfigure}
    \begin{subfigure}{0.49\textwidth}
        \includegraphics[width=\textwidth]{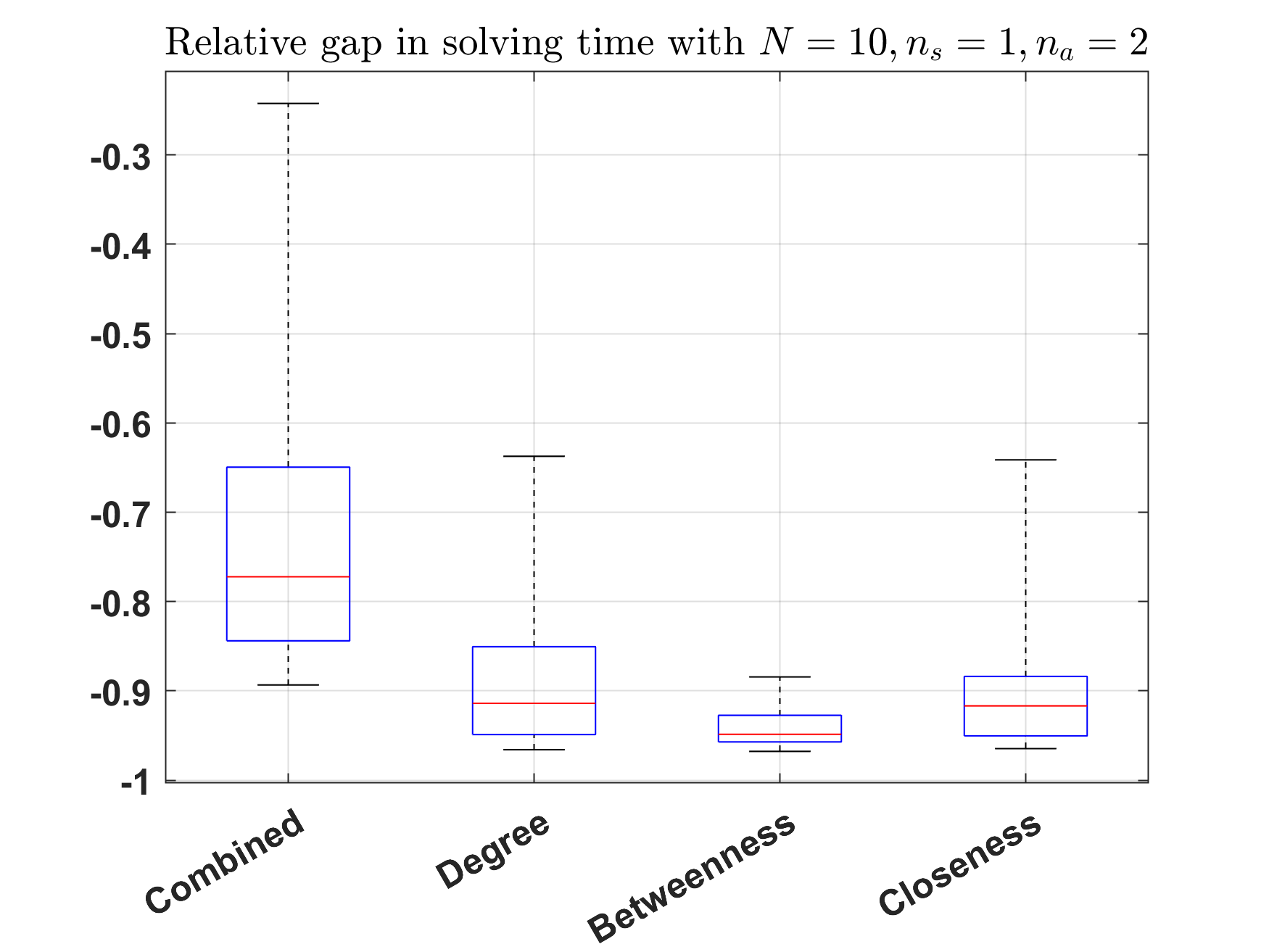}
    \end{subfigure}
    \begin{subfigure}{0.49\textwidth}
        \includegraphics[width=\textwidth]{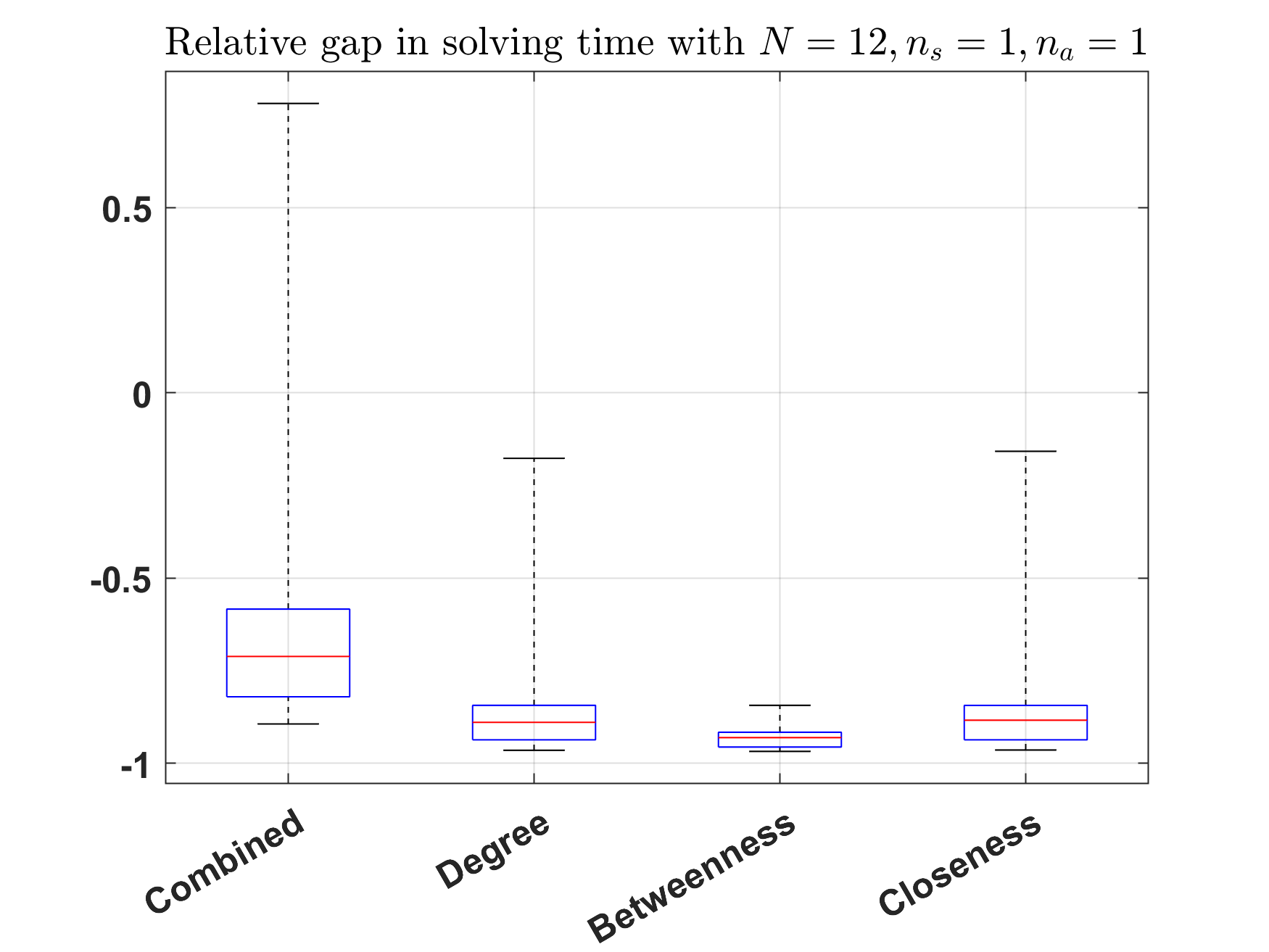}
    \end{subfigure}
    \begin{subfigure}{0.49\textwidth}
        \includegraphics[width=\textwidth]{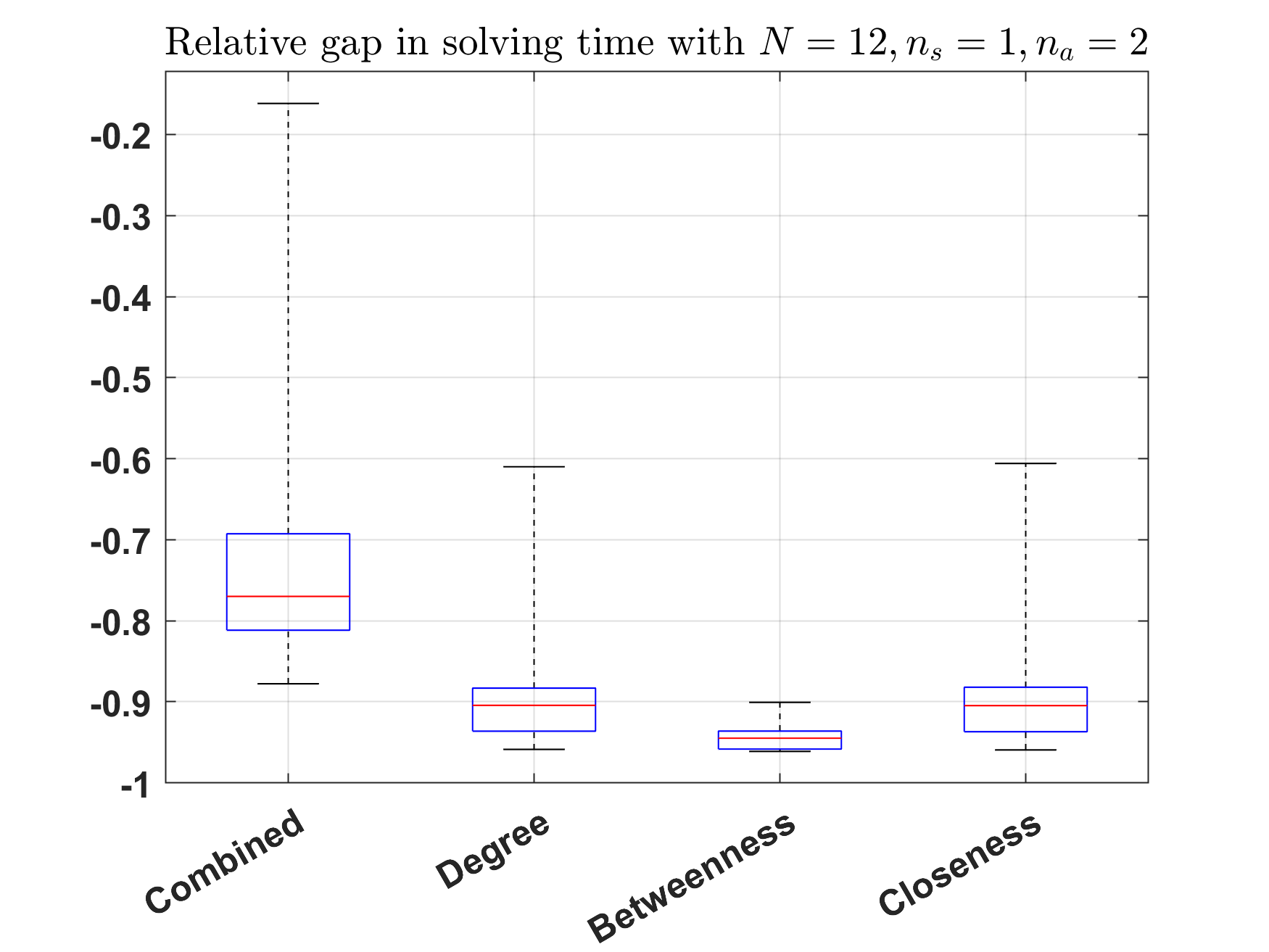}
    \end{subfigure}
    \begin{subfigure}{0.49\textwidth}
        \includegraphics[width=\textwidth]{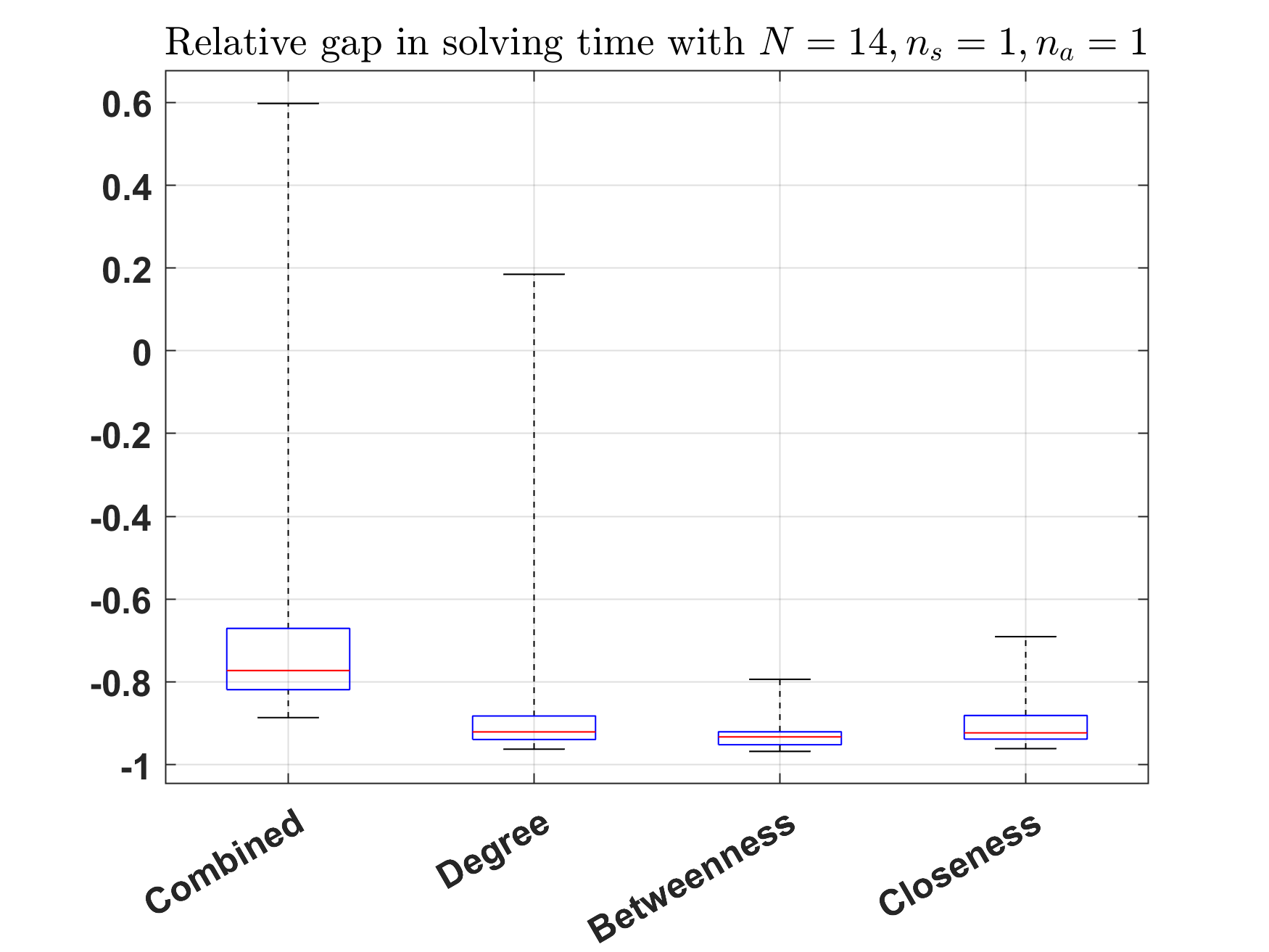}
    \end{subfigure}
    \begin{subfigure}{0.49\textwidth}
        \includegraphics[width=\textwidth]{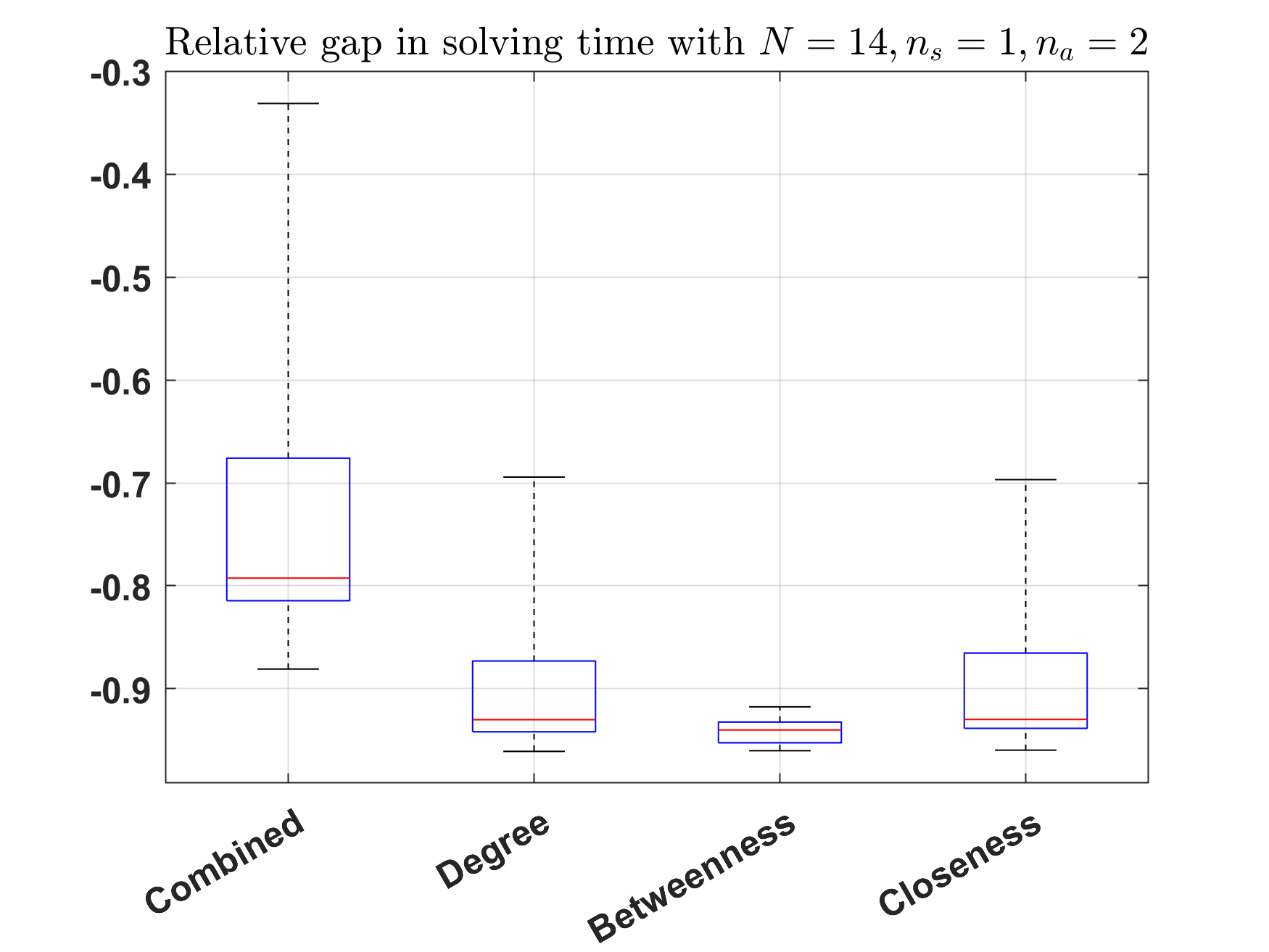}
    \end{subfigure}
    \caption{Relative gap of the solving time between the optimal value and the value given by choosing monitor vertices based on centrality measures. The network size $N = \{ 10, \, 12, \, 14 \}$, the attack budget $n_a = \{1,\,2\}$, and the monitor budget $n_s = 1$. For each network size, $30$ Erdős–Rényi random graphs where an edge is included to connect two vertices with a probability of 0.5.}
    \label{fig:time10_14}
\end{figure}
\begin{figure}[!t]
    \centering
    \begin{subfigure}{0.49\textwidth}
        \includegraphics[width=\textwidth]{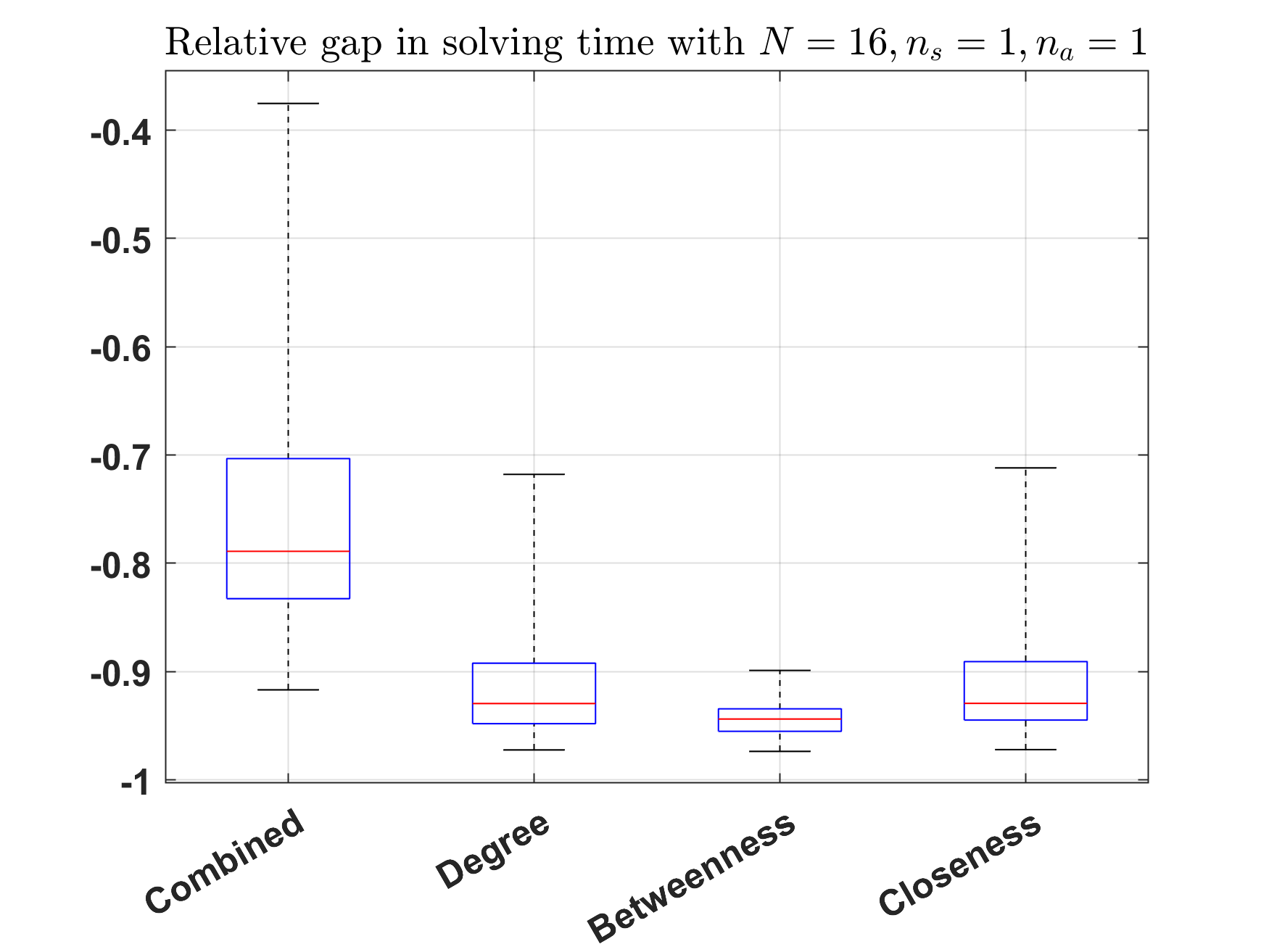}
    \end{subfigure}
    \begin{subfigure}{0.49\textwidth}
        \includegraphics[width=\textwidth]{figs/16node30graph1monitor1attacktime.png}
    \end{subfigure}
    \begin{subfigure}{0.49\textwidth}
        \includegraphics[width=\textwidth]{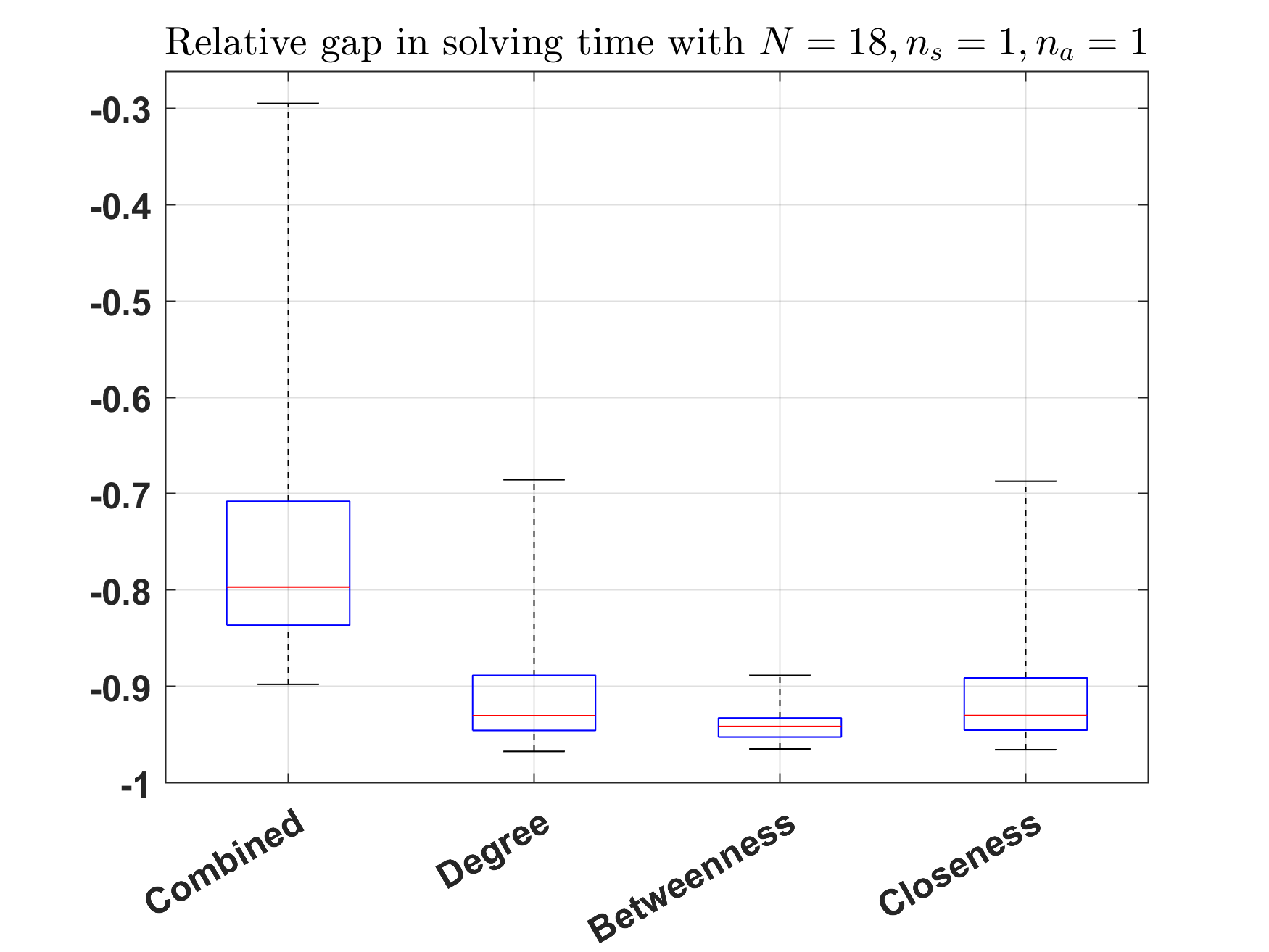}
    \end{subfigure}
    \begin{subfigure}{0.49\textwidth}
        \includegraphics[width=\textwidth]{figs/18node30graph1monitor1attacktime.png}
    \end{subfigure}
    \begin{subfigure}{0.49\textwidth}
        \includegraphics[width=\textwidth]{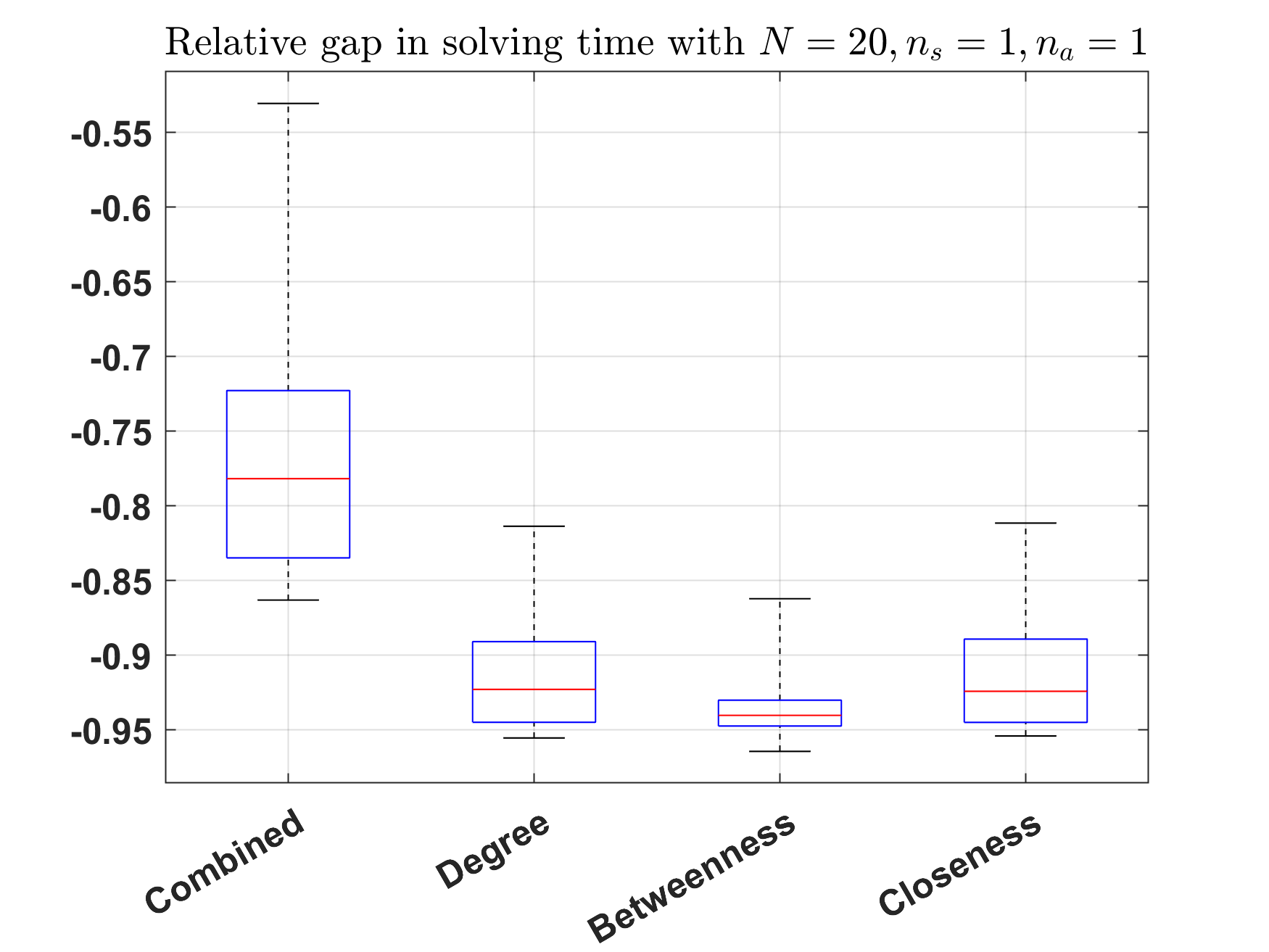}
    \end{subfigure}
    \begin{subfigure}{0.49\textwidth}
        \includegraphics[width=\textwidth]{figs/20node30graph1monitor1attacktime.png}
    \end{subfigure}
    \caption{Relative gap of the solving time between the optimal value and the value given by choosing monitor vertices based on centrality measures. The network size $N = \{ 16, \, 18, \, 20 \}$, the attack budget $n_a = \{1,\,2\}$, and the monitor budget $n_s = 1$. For each network size, $30$ Erdős–Rényi random graphs where an edge is included to connect two vertices with a probability of 0.5.}
    \label{fig:time16_20}
\end{figure}

~\newpage ~\newpage ~\newpage 
\subsection{IEEE 14-bus system}
In this part, we did an experiment on an IEEE benchmark of power systems which is the IEEE 14-bus system, depicted in Figure~\ref{fig:ieee14}. The system includes 14 buses and 20 transmission lines.
The behavior of a bus $i \in \{1,2,\ldots,14\}$ can be described by the so-called swing equation \cite{tegling2018fundamental}:
\begin{align}
    I_i \ddot \theta_i(t) + D_i \dot \theta_i(t) = -\sum_{j \in \Nc_i} P_{ij}(t), \label{bus_dyn}
\end{align}
where $I_i$ and $D_i$ are the inertia and damping coefficients, respectively, and $P_{ij}$ is the active power flow from bus $j$ to bus $i$. 
Considering that there are no power losses and $V_i = |V_i|e^{j p_i}~(j^2 = -1)$ and $\theta_i$ be the complex voltage and the phase angle of the bus $i$, respectively. 
The active power flow $P_{ij}(t)$ from bus $j$ to bus $i$ is given by 
\begin{align}
    P_{ij}(t) = -\ell_{ij} \sin(\theta_i - \theta_j), \label{power_ij}
\end{align}
where $-\ell_{ij} \in \Rbb_+$ is the susceptance of the power transmission line connecting bus $i$ with bus $j$.
Those parameters consisting of line susceptance $-\ell_{ij}$, inertia $I_i$, and damping $D_i$ can be found at \cite{ieee14bus} and are listed in Table~\ref{tab:ieee_parameters}.
Since the phase angles usually are close, we can linearize \eqref{power_ij} and rewrite the dynamics \eqref{bus_dyn} of bus $i$ as follows
\begin{align}
    I_i \ddot \theta_i(t) + D_i \dot \theta_i(t) =  \sum_{j \in \Nc_i} \ell_{ij} \Big( \theta_i(t) - \theta_j(t) \Big) ,
\end{align}
which can be rewritten as follows:
\begin{align}
    \frac{\text{d}}{\text{d} \, t}
    \ba{c}
    \dot \theta(t) \\ \theta(t)
    \ea = \ba{cc}
    0 & I \\
    -I^{-1}\,L & -I^{-1}\, D \ea 
    \ba{c}
    \dot \theta(t) \\ \theta(t)
    \ea, 
\end{align}
where $\dot \theta(t) = [\dot \theta_1(t),\dot \theta_2(t),\ldots,\dot \theta_N(t)]^\top$, $\theta(t) = [\theta_1(t),\theta_2(t),\ldots,\theta_N(t)]^\top$. Further, $L$ is a Laplacian matrix representing the network, $I= \text{diag}(I_i)$, and $D = \text{diag}(D_i)$.

By applying the computation of centrality measures in the previous section, we found that the bus number $4$ always has the highest value. However, the optimal value computed by \eqref{mixedintegersdpoptimal} gives us the optimal bus number $2$. Next, we compare the WCAI of all the monitor buses depicted in Figure~\ref{fig:ieee14_result}. We observe that the relative gap in WCAI of the bus number $4$ compared to the optimal bus $2$ is under 2\%, an acceptable value. Meanwhile, the other buses provide much higher relative gaps, for example, bus $8$ gives us 10\% relative gap.

\begin{figure}[!t]
    \centering
    \includegraphics[width=0.8\textwidth]{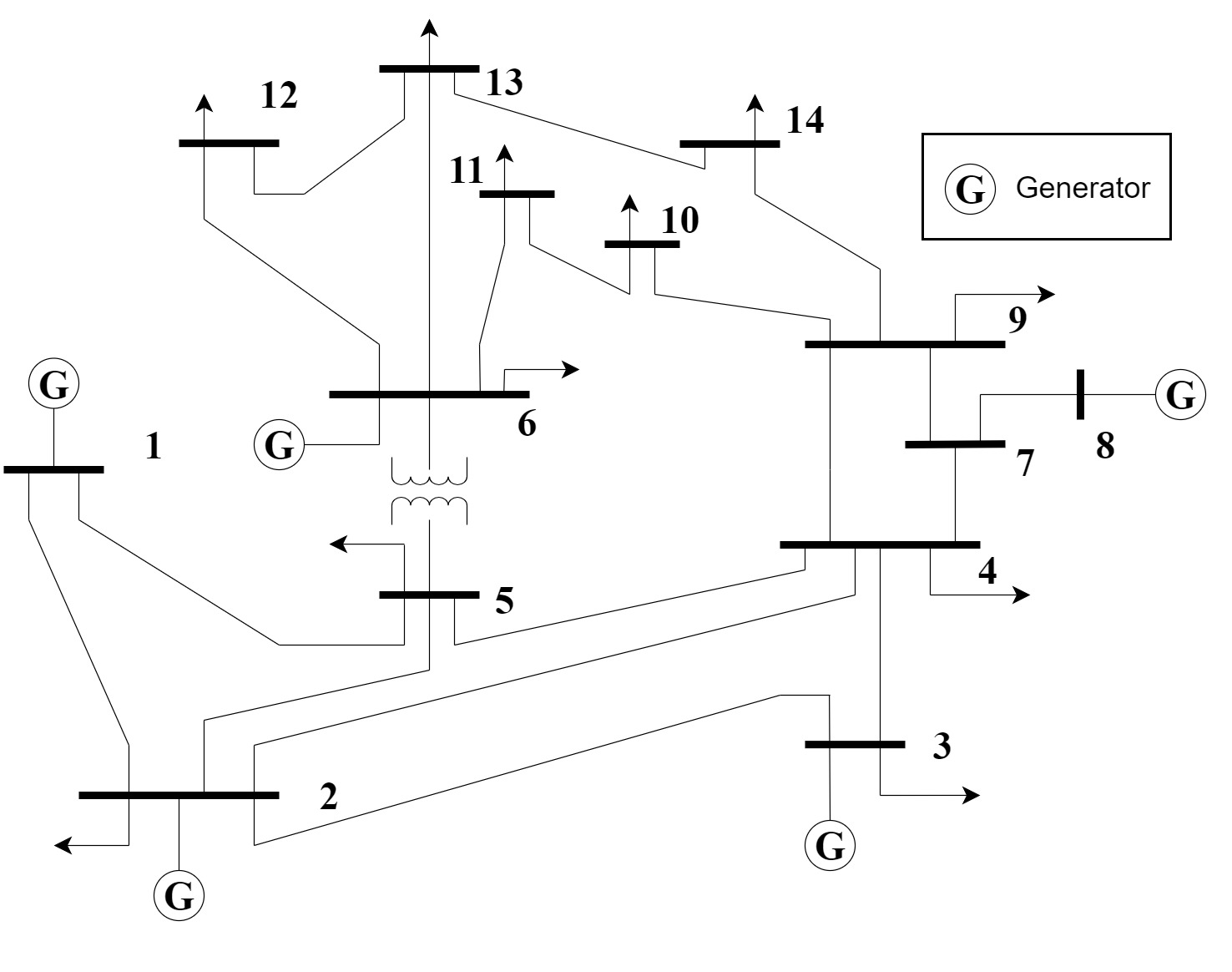}
    \caption{IEEE 14-bus system.}
    \label{fig:ieee14}
\end{figure}
\begin{table}[!t]
    \centering
    \begin{tabular}{|c|c|c|}
        \hline
        \hspace{0.7cm}Susceptance\hspace{0.7cm} & \hspace{0.7cm}Inertial\hspace{0.7cm} & \hspace{0.7cm}Damping\hspace{0.7cm} \\ \hline 
        $-\ell_{12} = 8.2838$ & $I_1 = 1.060$ & $D_1 = 0$ \\  \hline
        $-\ell_{15}=31.2256$  & $I_2 = 1.045$ & $D_2 = 4.98$ \\  \hline
        $-\ell_{23}=27.7158$ & $I_3 = 1.010$ & $D_3 = 12.72$\\  \hline
        $-\ell_{24}=24.6848$ & $I_4 = 1.019$ & $D_4 = 10.33$ \\  \hline
        $-\ell_{25}=24.3432$ & $I_5 = 1.020$ & $D_5 = 8.78$\\  \hline
        $-\ell_{34}=23.9442$ & $I_6 = 1.070$ & $D_6 = 14.22$ \\  \hline
        $-\ell_{45}=5.8954$  & $I_7 = 1.062$ & $D_7 = 13.37$ \\  \hline
        $-\ell_{47}=29.2768$ & $I_8 = 1.090$ & $D_8 = 13.36$ \\  \hline
        $-\ell_{49}=77.8652$ & $I_9 = 1.056$  &  $D_9 = 14.94$ \\  \hline
        $-\ell_{56}=35.2828$ & $I_{10} = 1.051$ &  $D_{10} = 15.10$\\  \hline
        $-\ell_{6,11}=35.2828$ & $I_{11} = 1.057$ & $D_{11} = 14.79$ \\  \hline
        $-\ell_{6,12}=35.8134$ & $I_{12} = 1.055$ & $D_{12} = 15.07$ \\  \hline
        $-\ell_{6,13}=18.2378$ & $I_{13} = 1.050$ & $D_{13} = 15.16$ \\  \hline
        $-\ell_{78}=24.6610$ & $I_{14} = 1.036$ & $D_{14} = 16.04$ \\  \hline
        $-\ell_{79}=15.4014$ & & \\  \hline
        $-\ell_{9,10}=11.8300$ & & \\  \hline
        $-\ell_{9,14}=37.8532$ & & \\  \hline
        $-\ell_{10,11}=26.8898$ & & \\  \hline
        $-\ell_{12,13}=27.9832$ & & \\  \hline
        $-\ell_{13,14}=48.7228$ & & \\  \hline
    \end{tabular}
    \caption{Parameters of the IEEE 14-bus system.}
    \label{tab:ieee_parameters}
\end{table}

\begin{figure}[!t]
    \centering
    \begin{subfigure}
        {0.49\textwidth}
        \includegraphics[width=\textwidth]{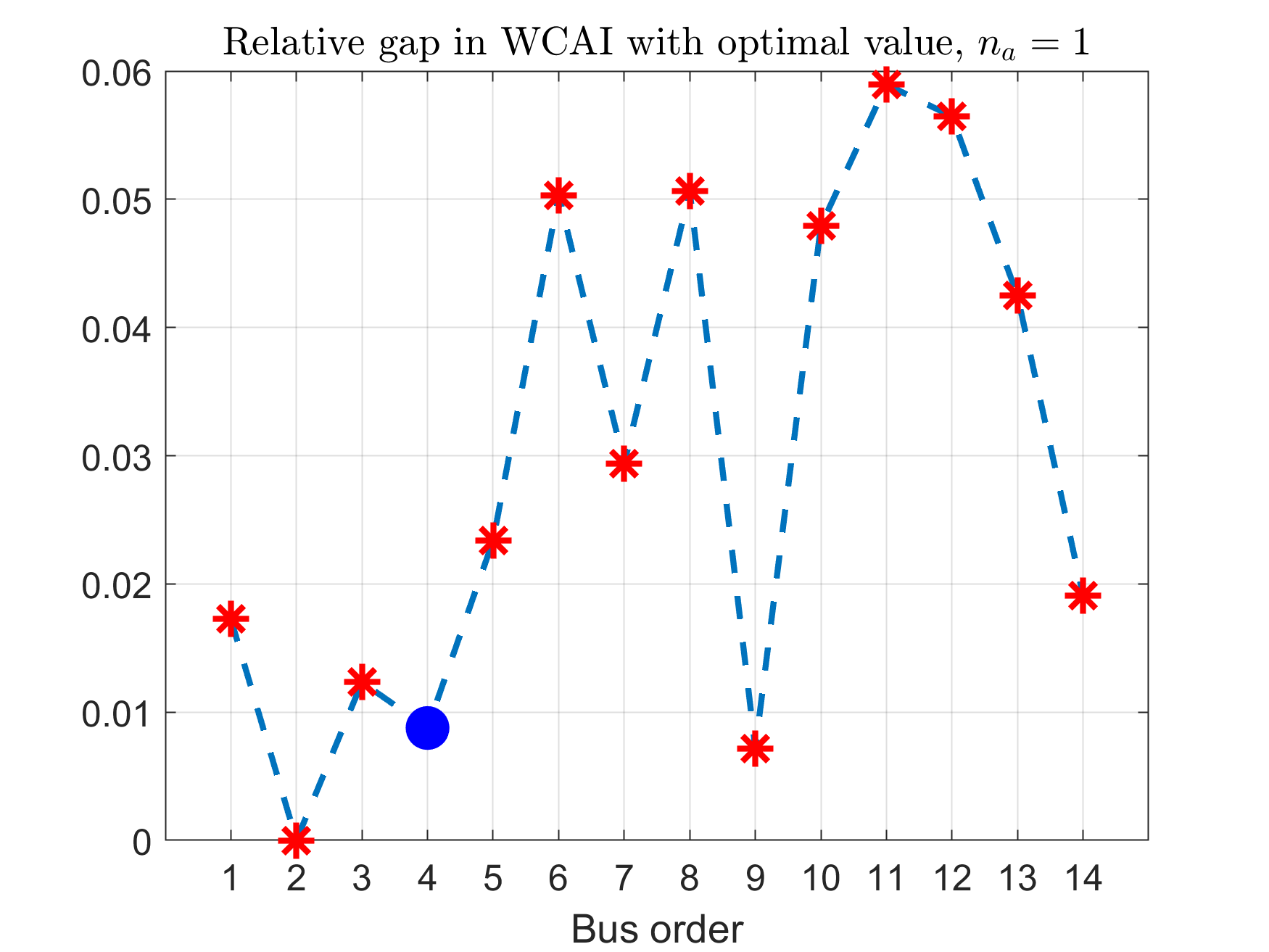}
    \end{subfigure}
    \begin{subfigure}
        {0.49\textwidth}
        \includegraphics[width=\textwidth]{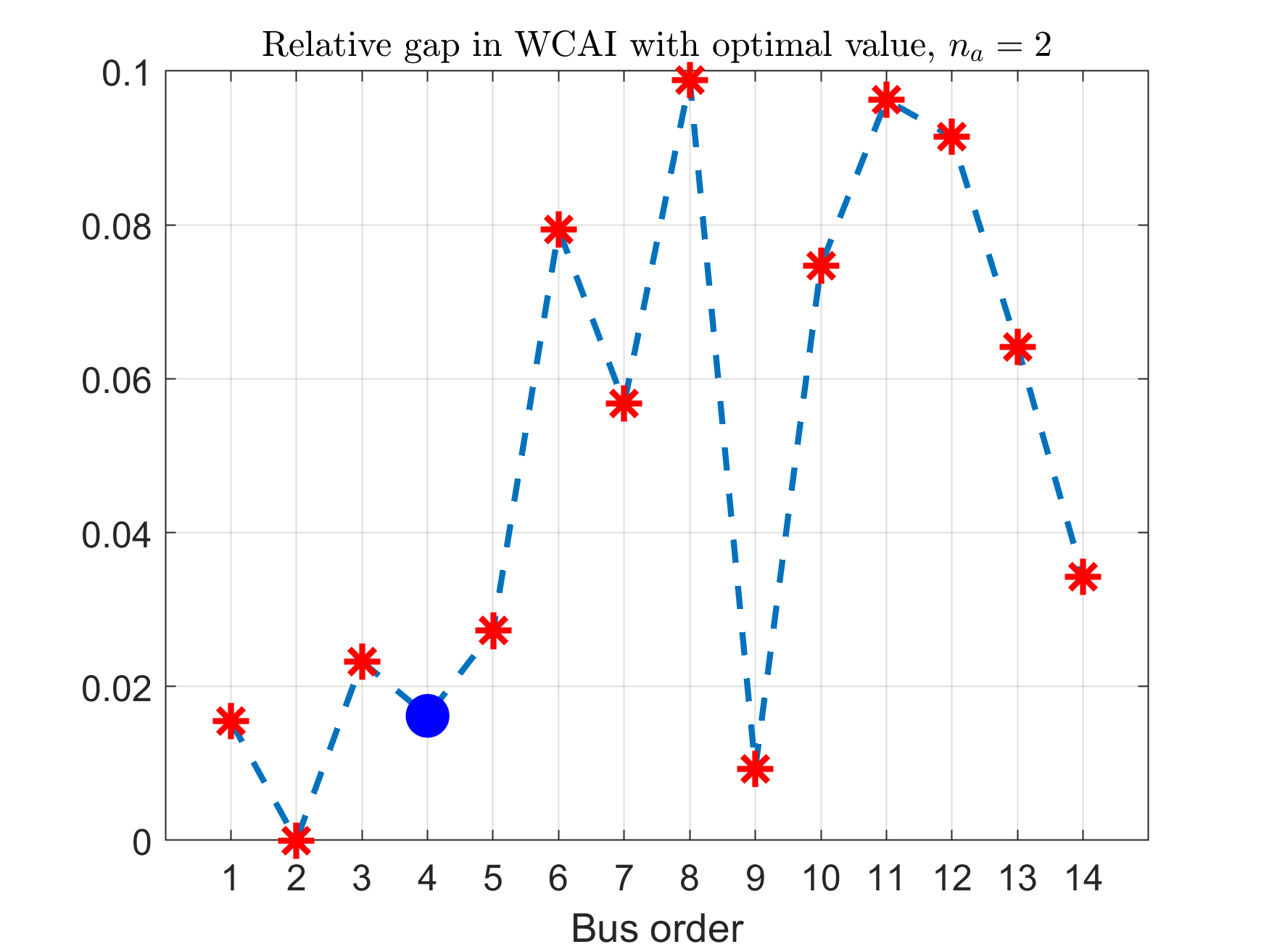}
    \end{subfigure}
    \caption{Relative gap of the worst-case attack impact between the optimal value and the value given by choosing a monitor vertex based on centrality measures (blue dot at bus 4).}
    \label{fig:ieee14_result}
\end{figure}

~\newpage
\section{Conclusion}
\label{sec:conclusion}
In conclusion, we explored the security allocation problem within a networked control system featuring two opposing agents: a defender and a malicious adversary. The adversary aims to maximize the WCAI on the network performance while remaining undetected, by launching stealthy data injection attacks. Conversely, the defender’s objective is to allocate security resources to minimize the WCAI. The proposed approach was centrality measures-based security allocation using degree, betweenness, and closeness centrality, in addition to a combined centrality measure approach using all three. The conclusion one can draw from this study is that the combined approach, although it could only perform as well as the best monitor set derived using the three centrality measures for each graph, outperformed them on average across the 30 Erdős–Rényi random graphs (a probability of 0.5) with 10 to 20 vertices considered in this paper. This approach can be utilized for security allocation in networks if the relative gap of $10\%$ in terms of WCAI to the optimal solution is acceptable. We can also conclude that because the solver times are significantly shorter for the combined centrality measure approach than computing the optimal solution, this method is advantageous when hardware capabilities are limited. 
Additionally, among the three centrality measures investigated, the betweenness centrality performed the best. Similar to the combined centrality measure approach, we can conclude that betweenness centrality can be utilized for the security allocation in networks derived using Erdős–Rényi random graphs (a probability of 0.5) with 10 to 20 vertices. Moreover, the solver time for only using betweenness centrality is a third or shorter than the combined approach.

\begin{credits}
\subsubsection{\ackname} 
This work is supported by the Swedish Research Council under the
grant 2021-06316 and by the Swedish Foundation for Strategic Research.
\end{credits}

\bibliographystyle{splncs04}
\bibliography{mybib}

\end{document}

%% file: deflatex3.tex
\def\Ac{{\mathcal A}}

\def\Ec{{\mathcal E}}

\def\Gc{{\mathcal G}}

\def\Lc{{\mathcal L}}

\def\Mc{{\mathcal M}}

\def\Nc{{\mathcal N}}

\def\Rbb{{\mathbb R}}

\def\Sbb{{\mathbb S}}

\def\Vc{{\mathcal V}}

\def\0{{\bf 0}}

\newcommand{\bitem}{\begin{itemize}}
\newcommand{\eitem}{\end{itemize}}
\newcommand{\btabular}{\begin{tabular}}
\newcommand{\etabular}{\end{tabular}}
\newcommand{\bcenter}{\begin{center}}
\newcommand{\ecenter}{\end{center}}
\newcommand{\bea}{\begin{eqnarray}}
\newcommand{\eea}{\end{eqnarray}}
\newcommand{\bean}{\begin{eqnarray*}}
\newcommand{\eean}{\end{eqnarray*}}
\newcommand{\ba}{\left[ \begin{array}}
\newcommand{\ea}{\\ \end{array} \right]}
\newcommand{\bear}{\begin{array}}
\newcommand{\eear}{\\ \end{array}}

\newcommand{\non}{\nonumber}

\newcommand*{\QEDB}{\hfill\ensuremath{\blacksquare}}%
\newcommand*{\QET}{\hfill\ensuremath{\triangleleft}}
\newcommand{\norm}[1]{\left\lVert#1\right\rVert}

\newcounter{subequation}
\def\beasub{\addtocounter{equation}{+1}
\setcounter{subequation}{\value{equation}}
\setcounter{equation}{0}
\renewcommand{\theequation}{\arabic{subequation}\alph{equation}}
\begin{eqnarray}}
\def\eeasub{\end{eqnarray}
\setcounter{equation}{\value{subequation}}
\renewcommand{\theequation}{\arabic{equation}}}

\def\inf{\operatornamewithlimits{inf\vphantom{p}}}


%% file: mybib.bib
@inproceedings{lofberg2004yalmip,
  title={YALMIP: A toolbox for modeling and optimization in MATLAB},
  author={Lofberg, Johan},
  booktitle={2004 IEEE International Conference on Robotics and Automation (IEEE Cat. No. 04CH37508)},
  pages={284--289},
  year={2004},
  organization={IEEE},
  doi={10.1109/CACSD.2004.1393890}
}

@article{purposeadver,
  author={Umsonst, David and Sarıtaş, Serkan and Dán, György and Sandberg, Henrik},
  journal={IEEE Transactions on Automatic Control}, 
  title={A Bayesian Nash Equilibrium-Based Moving Target Defense Against Stealthy Sensor Attacks}, 
  year={2024},
  volume={69},
  number={3},
  pages={1659-1674},
  doi={10.1109/TAC.2023.3328754}
}

@article{Tung2024security,
  title={Security Allocation in Networked Control Systems under Stealthy Attacks},
  author={Anh Tung Nguyen and André M.H. Teixeira and Alexander Medvedev},
  journal={IEEE Transactions on Control of Network Systems},
  year={2024},
  doi={10.1109/TCNS.2024.3462546}
}

@incollection{Golbeck2013Chapter3,
  author    = {Golbeck, Jennifer},
  title     = {Network Structure and Measures}, 
  booktitle = {Analyzing the Social Web},
  publisher = {Newnes},
  year      = {2013},
  chapter   = {3},
  pages     = {25-44}
}

@book{trentelman1991dissipation,
  title={The dissipation inequality and the algebraic Riccati equation},
  author={Trentelman, Harry L and Willems, Jan C},
  year={1991},
  publisher={Springer}
}

@article{gupta2016dynamic,
	title={Dynamic games with asymmetric information and resource constrained players with applications to security of cyberphysical systems},
	author={Gupta, Abhishek and Langbort, C{\'e}dric and Ba{\c{s}}ar, Tamer},
	journal={IEEE Transactions on Control of Network Systems},
	volume={4},
	number={1},
	pages={71--81},
	year={2016},
	publisher={IEEE},
    doi={10.1109/TCNS.2016.2584183}
}

@article{miao2018hybrid,
	title={A hybrid stochastic game for secure control of cyber-physical systems},
	author={Miao, Fei and Zhu, Quanyan and Pajic, Miroslav and Pappas, George J},
	journal={Automatica},
	volume={93},
	pages={55--63},
	year={2018},
	publisher={Elsevier},
    doi={10.1016/j.automatica.2018.03.012}
}

@article{falliere2011w32,
  title={W32. stuxnet dossier},
  author={Falliere, Nicolas and Murchu, Liam O and Chien, Eric},
  journal={White paper, Symantec Corp., Security Response},
  volume={5},
  number={6},
  pages={29},
  year={2011}
}

@article{teixeira2015secure,
  title={A secure control framework for resource-limited adversaries},
  author={Teixeira, Andr{\'e} and Shames, Iman and Sandberg, Henrik and Johansson, Karl Henrik},
  journal={Automatica},
  volume={51},
  pages={135--148},
  year={2015},
  publisher={Elsevier},
  doi={10.1016/j.automatica.2014.10.067}
}

@article{nguyen2022single,
  title={A Single-Adversary-Single-Detector Zero-Sum Game in Networked Control Systems},
  author={Nguyen, Anh Tung and Teixeira, Andr{\'e} M H  and Medvedev, Alexander},
  journal={IFAC-PapersOnLine},
  volume={55},
  number={13},
  pages={49--54},
  year={2022},
  publisher={Elsevier},
  doi ={j.ifacol.2022.07.234}
}

@article{kshetri2017hacking,
  title={Hacking power grids: A current problem},
  author={Kshetri, Nir and Voas, Jeffrey},
  journal={Computer},
  volume={50},
  number={12},
  pages={91--95},
  year={2017},
  publisher={IEEE},
  doi={10.1109/MC.2017.4451203}
}

@inproceedings{nguyen2022zero,
  title={A zero-sum game framework for optimal sensor placement in uncertain networked control systems under cyber-attacks},
  author={Nguyen, Anh Tung and Anand, Sribalaji C and Teixeira, Andr{\'e} MH},
  booktitle={2022 IEEE 61st Conference on Decision and Control (CDC)},
  pages={6126--6133},
  year={2022},
  organization={IEEE},
  doi={10.1109/CDC51059.2022.9992468}
}

@phdthesis{tegling2018fundamental,
  title={Fundamental limitations of distributed feedback control in large-scale networks},
  author={Tegling, Emma},
  year={2018},
  school={KTH Royal Institute of Technology}
}

@misc{ieee14bus,
author = {UW-EE},
title = {IEEE 14-Bus Test Case},
month = August,
year = {1993},
url = {labs.ece.uw.edu/pstca/pf14/ieee14cdf.txt}
}

@book{bacsar1998dynamic,
  title={Dynamic noncooperative game theory},
  author={Ba{\c{s}}ar, Tamer and Olsder, Geert Jan},
  year={1998},
  publisher={SIAM}
}

@article{li2018false,
  title={False data injection attacks on networked control systems: A Stackelberg game analysis},
  author={Li, Yuzhe and Shi, Dawei and Chen, Tongwen},
  journal={IEEE Transactions on Automatic Control},
  volume={63},
  number={10},
  pages={3503--3509},
  year={2018},
  publisher={IEEE},
  doi={10.1109/TAC.2018.2798817}
}

@article{shukla2022robust,
  title={A Robust Stackelberg Game for Cyber-Security Investment in Networked Control Systems},
  author={Shukla, Pratishtha and An, Lu and Chakrabortty, Aranya and Duel-Hallen, Alexandra},
  journal={IEEE Transactions on Control Systems Technology},
  volume={31},
  number={2},
  pages={856--871},
  year={2022},
  publisher={IEEE},
  doi={10.1109/TCST.2022.3207671}
}

@article{yuan2019stackelberg,
  title={Stackelberg-game-based defense analysis against advanced persistent threats on cloud control system},
  author={Yuan, Huanhuan and Xia, Yuanqing and Zhang, Jinhui and Yang, Hongjiu and Mahmoud, Magdi S},
  journal={IEEE Transactions on Industrial Informatics},
  volume={16},
  number={3},
  pages={1571--1580},
  year={2019},
  publisher={IEEE},
  doi={10.1109/TII.2019.2925035}
}

@article{milovsevic2023strategic,
  title={Strategic Monitoring of Networked Systems with Heterogeneous Security Levels},
  author={Milo{\v{s}}evi{\'c}, Jezdimir and Dahan, Mathieu and Amin, Saurabh and Sandberg, Henrik},
  journal={IEEE Transactions on Control of Network Systems},
  year={2023},
  pages={1165--1176},
  volume={11},
  number={3},
  publisher={IEEE},
  doi={10.1109/TCNS.2023.3333392}
}

@article{zhang2021stealthy,
  title={Stealthy integrity attacks for a class of nonlinear cyber-physical systems},
  author={Zhang, Kangkang and Keliris, Christodoulos and Parisini, Thomas and Polycarpou, Marios M},
  journal={IEEE Transactions on Automatic Control},
  volume={67},
  number={12},
  pages={6723--6730},
  year={2021},
  publisher={IEEE},
  doi={10.1109/TAC.2021.3131656}
}

@article{park2019stealthy,
  title={Stealthy adversaries against uncertain cyber-physical systems: Threat of robust zero-dynamics attack},
  author={Park, Gyunghoon and Lee, Chanhwa and Shim, Hyungbo and Eun, Yongsoon and Johansson, Karl H},
  journal={IEEE Transactions on Automatic Control},
  volume={64},
  number={12},
  pages={4907--4919},
  year={2019},
  publisher={IEEE},
  doi={10.1109/TAC.2019.2903429}
}

@article{mo2013detecting,
  title={Detecting integrity attacks on SCADA systems},
  author={Mo, Yilin and Chabukswar, Rohan and Sinopoli, Bruno},
  journal={IEEE Transactions on Control Systems Technology},
  volume={22},
  number={4},
  pages={1396--1407},
  year={2013},
  publisher={IEEE},
  doi={10.1109/TCST.2013.2280899}
}

@article{naha2023quickest,
  title={Quickest physical watermarking-based detection of measurement replacement attacks in networked control systems},
  author={Naha, Arunava and Teixeira, Andr{\'e} and Ahl{\'e}n, Anders and Dey, Subhrakanti},
  journal={European Journal of Control},
  volume={71},
  pages={100804},
  year={2023},
  publisher={Elsevier},
  doi={10.1016/j.ejcon.2023.100804}
}

@inproceedings{teixeira2019optimal,
  title={Optimal stealthy attacks on actuators for strictly proper systems},
  author={Teixeira, Andr{\'e} M H},
  booktitle={2019 IEEE 58th Conference on Decision and Control (CDC)},
  pages={4385--4390},
  year={2019},
  organization={IEEE},
  doi={10.1109/CDC40024.2019.9029171}
}

@article{ferrari2020switching,
  title={A switching multiplicative watermarking scheme for detection of stealthy cyber-attacks},
  author={Ferrari, Riccardo M G and Teixeira, Andr{\'e} M H},
  journal={IEEE Transactions on Automatic Control},
  volume={66},
  number={6},
  pages={2558--2573},
  year={2020},
  publisher={IEEE},
  doi={10.1109/TAC.2020.3013850}
}

@inproceedings{gallo2021design,
  title={Design of multiplicative watermarking against covert attacks},
  author={Gallo, Alexander J and Anand, Sribalaji C and Teixeira, Andr{\'e} M H and Ferrari, Riccardo MG},
  booktitle={2021 60th IEEE Conference on Decision and Control (CDC)},
  pages={4176--4181},
  year={2021},
  organization={IEEE},
  doi={10.1109/CDC45484.2021.9683075}
}

@inproceedings{anand2022risk,
  title={Risk assessment and optimal allocation of security measures under stealthy false data injection attacks},
  author={Anand, Sribalaji C and Teixeira, Andr{\'e} M H and Ahl{\'e}n, Anders},
  booktitle={2022 IEEE Conference on Control Technology and Applications (CCTA)},
  pages={1347--1353},
  year={2022},
  organization={IEEE},
  doi={10.1109/CCTA49430.2022.9966025}
}

@inproceedings{li2023secure,
  title={Secure state estimation with asynchronous measurements against malicious measurement-data and time-stamp manipulation},
  author={Li, Zishuo and Nguyen, Anh Tung and Teixeira, Andr{\'e} MH and Mo, Yilin and Johansson, Karl H},
  booktitle={2023 62nd IEEE Conference on Decision and Control (CDC)},
  pages={7073--7080},
  year={2023},
  organization={IEEE},
  doi={10.1109/CDC49753.2023.10383571}
}

@article{li2024secure,
  title={Secure Distributed Dynamic State Estimation against Sparse Integrity Attack via Distributed Convex Optimization},
  author={Li, Zishuo and Mo, Yilin},
  journal={IEEE Transactions on Automatic Control},
  year={2024},
  pages={6089--6104},
  volume={69},
  number={9},
  publisher={IEEE},
  doi={10.1109/TAC.2024.3397158}
}

@article{nguyen2023optimal,
  title={Optimal detector placement in networked control systems under cyber-attacks with applications to power networks},
  author={Nguyen, Anh Tung and Anand, Sribalaji C and Teixeira, Andr{\'e} MH and Medvedev, Alexander},
  journal={IFAC-PapersOnLine},
  volume={56},
  number={2},
  pages={1820--1826},
  year={2023},
  publisher={Elsevier},
  doi={10.1016/j.ifacol.2023.10.1896}
}
